\newtheorem{theorem}{Theorem}
\newtheorem{proposition}{Proposition}
\newtheorem{assumption}{Assumption}
\newtheorem{lemma}{Lemma}
\newenvironment{proof}{\medskip\noindent{\it Proof. }}{ \medskip}
\newtheorem{remark}{Remark}
\def\begequarr{\begin{eqnarray}}
\def\endequarr{\end{eqnarray}}
\def\begequarrs{\begin{eqnarray*}}
\def\endequarrs{\end{eqnarray*}}
\def\begequ{\begin{equation}}
\def\endequ{\end{equation}}
\def\begequs{\begin{equation*}}
\def\endequs{\end{equation*}}
\def\begite{\begin{itemize}}
\def\endite{\end{itemize}}
\def\begcen{\begin{center}}
\def\endcen{\end{center}}
\def\begrem{\begin{remark}\rm}
\def\endrem{\end{remark}}
\def\ba{\begin{array}}
\def\ea{\end{array}}
\def\diag{\mbox{diag}}
\def\col{ \mbox{col}\; }
\def\blkdiag{\mbox{blkdiag}\;}
\def\dst{\displaystyle}
\def\beeq#1{\begin{equation}{#1}\end{equation}}
\def\qmx#1{\begin{pmatrix}{#1}\end{pmatrix}}
\begin{document}
\title{\huge Robust Output Feedback Stabilization of MIMO Invertible Nonlinear Systems with Output-Dependent Multipliers (Extended Version)}
\author{Lei Wang, Christopher M. Kellett
\thanks{*This work was supported by the Australian Research Council under ARC-DP160102138.}
\thanks{L. Wang is with Australian Center for Field Robotics, The University of Sydney, Australia. C. M. Kellett is with  School of Engineering, Australian National University, Australia.
 (E-mail: lei.wang2@sydney.edu.au; chris.kellett@anu.edu.au)}%
\thanks{This work was done while L. Wang and C. M. Kellett were with Faculty of Engineering and Built Environment, University of Newcastle, Australia.}%
        }
\date{}
\maketitle

\begin{abstract}
This note studies  the robust output feedback stabilization problem for  multi-input multi-output invertible nonlinear systems with output-dependent multipliers. An ``ideal" state feedback is first designed under certain mild assumptions. Then, a set of extended low-power high-gain observers is systematically designed, providing a complete estimation of the ``ideal" feedback law. This yields a robust output feedback stabilizer such that the origin of the closed-loop system is semiglobally asymptotically stable, while improving the numerical implementation  with the power of high-gain parameters up to 2.

\end{abstract}

\begin{keywords}                           
 Multi-input multi-output; Extended high-gain observer; Output feedback; Invertibility
\end{keywords}

\section{Introduction}
The problem of output feedback stabilization to the zero equilibrium for nonlinear systems is a fundamental problem in the field of systems and control. Several methodologies  have been developed such as high-gain observer-based, backstepping-based, and passivity-based control \cite{Isidori(1999),Khalil(book2002)}, that differ in the kind of system structure (normal form or lower triangular form), and in assumptions on the internal stability (input-output stability or output-to-state stability).

For single-input single-output (SISO) nonlinear systems, particular attention has been devoted to systems having a \emph{normal form}, for which a high-gain observer (HGO) can be employed to derive an output feedback stabilizer, providing asymptotic/practical stability in a semiglobal sense. Along this line, an extended high-gain observer (EHGO)-based approach is developed in \cite{Freidovich&Khalil(TAC2008)}, where an ``ideal" state feedback, consisting of a linear stabilizing term and a term to cancel the undesired terms (referred to as  a perturbation term uniformly), is first designed. A robust output feedback stabilizer can then be designed by using an EHGO to not only estimate states in the stabilizing term, but also provide a \emph{partial} estimation to the perturbation term. As a result, one can recover an ``ideal" system performance obtained by the ideal state feedback.
The EHGO technique has been extended to multi-input multi-output (MIMO) nonlinear systems with a well-defined vector relative degree \cite{WangSCL, Guo&Zhao(2013)}, for which a static state feedback law for feedback linearization can be designed to decouple all input-output channels. However, the class of systems considered in  \cite{WangSCL,Guo&Zhao(2013)} is a very particular one, while the stabilization of more general classes of MIMO nonlinear systems is generally nontrivial and cannot be achieved via direct extensions of  SISO results.


Recently,  several authors have studied a general class of MIMO nonlinear systems \cite{WangTAC,Wang&Isidori(AUT2017),Wang&Isidori(TAC2017),Isidori(2017),Wu&Isidori(2019)}, referred to as invertible MIMO nonlinear systems \cite{Hirschorn, Singh},  for which a vector relative degree is not necessary. A significant feature of these systems is that the input-output behavior cannot be fully decoupled by a static feedback linearization law \cite{Isidori(2017)}, due to the presence of nonzero ``multipliers" when running the Structure Algorithm \cite{Singh,Isidori(2017)} to derive a multivariable normal form.
In \cite{WangTAC}, with an input-output linearizable assumption (i.e., having constant multipliers), by defining a ``virtual" output as a linear combination of actual outputs and their derivatives, the invertible systems can be transformed to an ``intermediate" form with a unitary vector relative degree, for which the feedback stabilization problem can be solved. This linearizable assumption is later relaxed in \cite{Wang&Isidori(TAC2017)} by permitting state-dependent ``multipliers" in a special structure such that a \emph{dynamical} feedback linearization can be used, but at the price of requiring a \emph{trivial} zero dynamics\footnote{The zero dynamics is said to be trivial if the constraint that the outputs are zero implies that the states are zero. Otherwise, we say that it is nontrivial.}. To apply the EHGO technique to robustify the stabilizer while recovering the dynamical feedback linearizing performance, \cite{Wu&Isidori(2019)} has further proposed a recursive design method  for the same class of invertible MIMO nonlinear systems with a lower-triangular high-frequency gain matrix. In spite of these impressive results, we note that the output feedback stabilization of invertible MIMO nonlinear systems with non-constant multipliers and nontrivial zero dynamics is still an open problem.

On the other hand,  in \cite{Freidovich&Khalil(TAC2008),WangSCL,Wu&Isidori(2019)} the maximum power of the high-gain parameter increases as the number of states increases, which in practice may create numerical implementation problems when the dimension of the system to be estimated is very large. To solve this problem, the low-power technique in \cite{Daniele&Lorenzo,Khalil(2017book)} can be employed. However, the combination with the low-power technique is nontrivial, particularly for invertible MIMO nonlinear systems.

Motivated by the previous context, this  note studies the problem of robust output feedback stabilization for  MIMO invertible nonlinear systems with output-dependent multipliers and nontrivial zero dynamics.  Compared to the relevant works \cite{Wang&Isidori(TAC2017),Wu&Isidori(2019)}, a nontrivial zero dynamics is permitted  and a lower triangular high-frequency gain matrix is not needed, though this note requires a stronger condition on multipliers. To the best knowledge of the authors, currently available approaches cannot be used to solve the considered problem.

In this paper, assuming that all states are accessible, an ideal state feedback law is first designed, rendering an asymptotically stable closed-loop system under a strongly minimum-phase condition. Taking advantage of both EHGO and low-power techniques, a set of extended low-power high-gain observers (ELPHGOs) is systematically designed, providing a \emph{complete} estimation of the ideal state feedback law. This in turn yields a robust output feedback stabilizer such that the origin of the resulting closed-loop system is semiglobally asymptotically stable.
Meanwhile, each EHGO has the power of its high-gain parameter only up to 2, which to some extent solves the numerical implementation problem.
It is worth noting that our ELPHGOs are designed with an estimation of the \emph{entire} perturbation term to achieve a \emph{complete} estimation of the ideal state feedback law, which is different from the \emph{partial} estimation in  \cite{Freidovich&Khalil(TAC2008),WangSCL,Wu&Isidori(2019)}. As further discussed in Remark \ref{remark-3}, this complete estimation in turn adds extra difficulties and challenges to the stability analysis.


{\bf Notations: } $|\cdot|$ denotes the standard Euclidean norm. $\otimes$ denotes the Kronecker product. A continuous function $\alpha:\mathbb{R}_+:=[0,\infty)\rightarrow\mathbb{R}_+$ is said to be of class $\mathcal{K}$ if $\alpha$ is strictly increasing and $\alpha(0)=0$, and of class $\mathcal{K}_{\infty}$ if it is also unbounded.  For any positive integer $d$, $\mathbf{0}_d$ denotes a $d\times 1$ vector, whose entries  are all zero, and $(A_d,B_d,C_d)$ is used to denote the matrix triplet in the prime form. Namely, $A_d$  denotes a shift matrix of dimension $d\times d$, $B_d=\qmx{0&\cdots&0&1}^{\top}\in\mathbb{R}^d$, and $C_d=\qmx{1&0&\cdots&0}\in\mathbb{R}^{1\times d}$. For any $x_i\in\mathbb{R}^{d_i}$, $i=1,\ldots,n$, we denote $\col(x_1,\ldots,x_n)$ as a vector in $\mathbb{R}^{\sum_{i=1}^nd_i}$ by concatenating all $x_i$'s in order. We denote $\mbox{satv}_l(s)$ as  a smooth vector-valued saturation function with saturation level $l$: $\mbox{satv}_l(s)=s$ if $|s|\leq l$; $0<\frac{d\,\mbox{satv}_l(s)}{ds}<1$ for all $|s|>l$; and $\lim_{s\rightarrow\infty}\mbox{satv}_l(s)=l+\epsilon_0$ with $0<\epsilon_0\ll1$.
 For convenience, $\nabla\mbox{satv}_l$ denotes the Jacobian matrix of $\mbox{satv}_l(\cdot)$.

\section{Preliminaries}
\label{sec-problemstatement}


{
Consider invertible MIMO nonlinear systems of the form
\vspace{-0.5em}
\beeq{\label{partialform}
\ba{l}
\dot x_0 = f_0(x_0,\xi,u)\,\\
\left\{
\ba{l}
{\dot \xi_1 = A_{r_1} \xi_1 + B_{r_1}[a_1(x) + b_1(x)u]}\\
y_1 = \xi_{1,1}
\ea
 \right.\\
\left\{\ba{l}
{\dot \xi_{k} = A_{r_k} \xi_k + B_{r_k} [a_k(x) + b_k(x)u]} \,\\ \qquad {+ \sum_{i=1}^{k-1} M_{k}^i(y) [a_i(x) + b_i(x)u]}\\
y_k = \xi_{k,1},\qquad 2\leq k\leq m
\ea\right.
\ea}

\vspace{-0.5em}
\noindent where internal states $x_0\in\mathbb{R}^{n_0}$ and partial states $\xi=\col(\xi_1,\ldots,\xi_m)\in\mathbb{R}^{\bf r}$ with $\xi_i=\col(\xi_{i,1},\ldots,\xi_{i,r_i})\in\mathbb{R}^{r_i}$ and $\mathbf{r}=\sum_{i=1}^mr_i$, output $y=\col(y_1,\ldots,y_m)$, control $u\in\mathbb{R}^m$,
and multiplier vectors $M_{k}^i: \mathbb{R}^m\rightarrow\mathbb{R}^{r_k}$  are defined by
\vspace{-0.5em}
\begin{equation}\label{eq:multiplier}
M_{k}^i(y) = \qmx{\mathbf{0}_{r_i-1}^{\top} & \delta_{k,r_i+1}^i(y) & \cdots & \delta_{k,r_k}^i(y) & 0}^\top\,.
\end{equation}

\vspace{-0.5em}
\noindent
For convenience, we let
\vspace{-0.5em}
\beeq{\label{eq:x}
x=\col(x_0,\xi)\in\mathbb{R}^n,\quad \mbox{ with $n=n_0+\mathbf{r}$.}
}

\vspace{-0.5em}
\noindent
Let $a(x)\in \mathbb{R}^{m}$ be a vector with the $i$-th entry $a_i(x)$, and $b(x)\in \mathbb{R}^{m\times m}$ be a matrix with  the $i$-th row $b_i(x)$. Throughout this paper, we suppose all mappings $f_0,a_i,b_i,\delta_{k,i+1}^s$ in (\ref{partialform}) are sufficiently smooth, and $a(0)=0$ and $b(x)$ is invertible for all $x\in\mathbb{R}^n$.
In this setting, this paper is interested in the problem of semiglobal asymptotic stabilization of system (\ref{partialform}) via output feedback.
As in \cite{Isidori(2017)}, we assume system (\ref{partialform}) is strongly---and also locally exponentially---minimum-phase.

\begin{assumption}\label{ass-MPP}
There exists an ISS Lyapunov function $V_0(x_0)$  for the $x_0$-subsystem with $\xi$ as an input, uniformly in $u$, such that $V_0(x_0)$ is positive definite and radially unbounded, and along the $x_0$-subsystem in (\ref{partialform}), for every $(x_0(0),\xi(0))\in\mathbb{R}^n$ and all $u\in\mathbb{R}^m$,
\vspace{-0.5em}
\beeq{\ba{l}\label{ineq-iss}
\dot V_0(x_0) \leq -\beta_1(|x_0|) + \beta_2(|\xi|)\,
\ea}

\vspace{-0.5em}
\noindent
holds for some $\beta_1,\beta_2\in\mathcal{K}_{\infty}$. The origin of system $\dot x_0 = f_0(x_0,0,u)$ is locally exponentially stable, uniformly in $u$.
\end{assumption}

If $x$ is available for feedback and if the functions $a(\cdot)$ and $b(\cdot)$ are known, we can design an ``ideal" control law
\vspace{-0.5em}
\beeq{\label{ideal-u}
u^\ast = b^{-1}(x)[-a(x) + v]
}

\vspace{-0.5em}
\noindent
with the residual control $v=\col(v_1,\ldots,v_m)$.

This ideal control reduces the input-output model of (\ref{partialform}) to
\vspace{-0.5em}
\beeq{\label{eq:xi-ideal}\ba{l}
\dot \xi_1 = A_{r_1} \xi_1 + B_{r_1} v_1\,\\
\dot \xi_k = A_{r_k} \xi_k + B_{r_k} v_k + \sum_{i=1}^{k-1} M_{k}^i(y) v_i\,,  2\leq k\leq m\,.
\ea}

\vspace{-0.5em}
\noindent
Regarding the design of $v$ stabilizing (\ref{eq:xi-ideal}), several approaches can be applied, such as high-gain method \cite{Isidori(1999)} for a semiglobal stabilizer.
In this paper, since our focus is not on the design of $v$ for (\ref{eq:xi-ideal}), and to ease the subsequent analysis, we additionally make the following assumption.
\begin{assumption}\label{ass-delta}
  All multipliers $\delta_{k,j}^i(y)$ in (\ref{eq:multiplier}) are bounded for all $y\in\mathbb{R}^m$.
\end{assumption}
This then motivates us to select $K_k\in\mathbb{R}^{1\times r_k}$  such that $A_{r_k}-B_{r_k}K_k$ is Hurwitz, and design $v$ as
\vspace{-0.5em}
\beeq{\label{v}
v_k = -K_k\,\xi_k \,, \quad 1\leq k\leq m\,.
}

\vspace{-0.5em}
Thus, by  Assumption \ref{ass-delta}, it can be easily seen that the origin of the $\xi_1$-subsystem in (\ref{eq:xi-ideal}) is globally exponentially stable (GES), and  the $\xi_2$-subsystem is input-to-state stable (ISS) with respect to input $\xi_1$ with a linear gain. Hence, the origin of the $(\xi_1,\xi_2)$-subsystems turns out GES by the  small-gain theorem \cite{Kellett2015}. Similarly, we can further  consider the $\xi_k$-subsystem recursively for $k=3,\ldots,m$ and eventually obtain that the origin of (\ref{eq:xi-ideal}) is GES with a quadratic Lyapunov function. This, with (\ref{ineq-iss}), implies that the origin of system (\ref{partialform}) with  (\ref{ideal-u}), (\ref{v}) is globally asymptotically stable using the standard small-gain theorem \cite{Ito2002,Kellett2015}, permitting a constructive Lyapunov function $V_x(x)$ for (\ref{partialform}) and an $\alpha_{x}\in \mathcal{K}_\infty$ such that $V_x(x)$ is positive definite and radially unbounded, and
\vspace{-0.5em}
\beeq{\label{V_x}
\dot V_x \leq -\alpha_{x}(|x|)\,.
}

\vspace{-0.5em}
We note that the feedback law (\ref{ideal-u})-(\ref{v}) is not implementable due to inaccessibility of the full state $x$.  Motivated by this, this note develops a new set of high-gain observers driven only by the output $y$, which provides an estimate of the  controller  (\ref{ideal-u})-(\ref{v}), stabilizing the origin of system (\ref{partialform}) in a semiglobal sense.

\begin{remark}
The considered class of  systems (\ref{partialform}) can be regarded as a particular case of the \emph{multivariable normal form} \cite{Isidori(1999)} with output-dependent multipliers $\delta_{k,i+1}^s$, and can be easily verified to be invertible in the sense of \cite{Singh}. The derivations of such form (\ref{partialform}) can follow the Structure Algorithm \cite{Isidori(2017)}.  We note that the stabilization problem of invertible MIMO system (\ref{partialform}), to the best knowledge of authors, has not been studied yet and cannot be solved by the existing approaches \cite{WangTAC,Wang&Isidori(AUT2017),Wang&Isidori(TAC2017),Wu&Isidori(2019)}.
Compared to the invertible MIMO nonlinear systems  having constant multipliers in \cite{WangTAC,Wang&Isidori(AUT2017)}, the multipliers $\delta_{k,i+1}^s(y)$ in (\ref{partialform}) are output-dependent, for which the ``intermediate" form with the vector relative degree $\{1,1,\ldots,1\}$ and strongly minimum-phase property cannot be obtained from ``virtual" outputs that are defined by a linear function of actual outputs and their derivatives. As a result, the recursive observer design approach in \cite{WangTAC,Wang&Isidori(AUT2017)} cannot be applied.  In contrast with \cite{Wang&Isidori(TAC2017),Wu&Isidori(2019)}, our model (\ref{partialform}) permits the existence of the zero dynamics (or $x_0$-dynamics in (\ref{partialform})) and the high-frequency gain matrix $b(x)$ to be in a general, instead of lower-triangular, structure. $\square$
\end{remark}

\begin{remark}
  In Assumption \ref{ass-MPP}, (\ref{ineq-iss}) characterizes that the system (\ref{partialform}) is weakly uniformly 0-detectable of order $\{r_1,\ldots,r_m\}$ in the sense of \cite{Liberzon(TAC2003)}. We note that in the present semiglobal setting, Assumption \ref{ass-delta} is not necessary and (\ref{v}) can be replaced by a high-gain feedback law \cite{Isidori(1999)}. $\square$
\end{remark}

}


\section{Observer and Control Design}

Let $\mathcal{C}_x\subset\mathbb{R}^n$ be any compact set, and $c>0$ be such that
\[
\mathcal{C}_x\subset \Omega_c: = \{x\in\mathbb{R}^n: V_x(x)\leq c\}\,
\]
where $V_x(x)$ is defined in (\ref{V_x}).
As in \cite{WangSCL}, we assume that the high-frequency gain matrix $b(x)$ satisfies the property below.
\begin{assumption}\label{ass-b}
For all $x\in\Omega_{c+1}$, there exist a constant nonsingular matrix $\widehat {\mathsf{B}} \in \mathbb{R}^{m\times m}$ and a number $0<\mu_0<1$ such that
\beeq{\label{bproperty}
|\Delta_b(x)| \le \mu_0 \,,\quad \mbox{with  $\Delta_b(x):=(b(x)-\widehat {\mathsf{B}})\widehat {\mathsf{B}}^{-1}$.}
}
\end{assumption}

Define the perturbation term \footnote{For readability, the arguments $(x,u)$ of $\sigma$ will be omitted occasionally.}
\beeq{\label{sigma}
\sigma(x,u) := a(x) + [b(x)-\widehat {\mathsf{B}}]u\,
}
and let $K=\mbox{blkdiag}(K_1,\ldots,K_m)$. {By simple calculations, it can be seen that the equation
\beeq{\label{u-ast}
u = -\widehat {\mathsf{B}}^{-1}(\sigma(x,u) + K\xi)\,
}
has the unique solution $u=u^\ast$, with $u^\ast$ denoting the ideal control law (\ref{ideal-u})-(\ref{v}).

In view of this, consider the control law $u = -\widehat {\mathsf{B}}^{-1}(\hat\sigma+ K\hat\xi)$ using $\hat\xi, \hat\sigma$  as estimates of the partial state $\xi$ and the perturbation $\sigma$, which are provided by some observer. If $\hat\sigma$ and $\hat\xi$ are asymptotic estimates of $\sigma$ and $\xi$, respectively, then it is clear from (\ref{u-ast}) that the  control law $u = -\widehat {\mathsf{B}}^{-1}(\hat\sigma+ K\hat\xi)$ is  an asymptotic estimate of the ideal control $u^\ast$.
This intuition motivates the subsequent output feedback controller design.}

With (\ref{sigma}), by letting $\widehat {\mathsf{B}}_i$ be the $i$-th row of matrix $\widehat {\mathsf{B}}$ and $\sigma_i$ be the $i$-th entry of $\sigma$, we can rewrite the input-output model of  (\ref{partialform}) as
\begin{equation}\label{eq:11}\ba{l}
 \dot \xi_1 = A_{r_1}\xi_1 + B_{r_1}[\sigma_1 + \widehat {\mathsf{B}}_1 u]\,\\
 \dot \xi_k = A_{r_k}\xi_k + \sum_{i=1}^{k-1} M_{k}^i(y)[\sigma_i + \widehat {\mathsf{B}}_i u] \\ \qquad + B_{r_k}[\sigma_k + \widehat {\mathsf{B}}_k u]\,, \quad 2\leq k\leq m\,.
\ea\end{equation}
Denote $M_{k,j}^i(y)$ as the $j$-th entry of vector $M_{k}^i(y)$ and note that $M_{k,j}^i=0$ for $j=1,\ldots,r_i-1,r_k$ and $M_{k,j}^i(y)=\delta_{k,j+1}^i(y)$ for $j=r_i,r_i+1,\ldots,r_k-1$. Let $\widehat {\mathsf{M}}_{k,j}^i(y)=\big[M_{k,j}^i(y)\,\, M_{k,j+1}^i(y)\big]^\top\in\mathbb{R}^2$ for $1\leq j\leq r_k-1$ and $1\leq i\leq k-1\leq m-1$. For (\ref{eq:11}), we propose a set of high-gain observers of the form {
\beeq{\label{ELPHGO-1}\ba{l}
\dot \eta_{1i} = A_2\eta_{1i} + D_2\eta_{1,i+1} + \Lambda_2(\ell_1)\Gamma_{1i}(\eta_{1,i-1}^2-\eta_{1i}^1)\,,\\
\qquad 1\leq i \leq r_1-2\,\\
\dot \eta_{1,r_1-1} = A_2\eta_{1,r_1-1} + D_2\eta_{1,r_1} + B_2\widehat {\mathsf{B}}_1 u \,\\ \qquad + \Lambda_2(\ell_1)\Gamma_{1,r_1-1}(\eta_{1,r_1-2}^2-\eta_{1,r_1-1}^1)\,\\
\dot \eta_{1,r_1} = A_2\eta_{1,r_1} + C_2^\top\widehat {\mathsf{B}}_1 u + \Lambda_2(\ell_1)\Gamma_{1,r_1}(\eta_{1,r_1-1}^2-\eta_{1,r_1}^1)\,\\
\ea}
and for $k=2,\ldots,m$,
\beeq{\label{ELPHGO-2}\ba{l}
\dot\eta_{ki} = A_2\eta_{ki}+D_2\eta_{k,i+1}+\sum_{j=1}^{k-1}\widehat {\mathsf{M}}_{k,i}^j(y)(\eta_{j,r_j}^2+ \widehat {\mathsf{B}}_j u)\\ \qquad + \Lambda_2(\ell_k)\Gamma_{k,i}(\eta_{k,i-1}^2-\eta_{k,i}^1)\,,\quad 1\leq i\leq r_k-2\\
\dot\eta_{k,i} = A_2\eta_{k,i}+D_2\eta_{k,i+1}+\sum_{j=1}^{k-1}\widehat {\mathsf{M}}_{k,i}^j(y)(\eta_{j,r_j}^2+ \widehat {\mathsf{B}}_j u)\\ \qquad +B_2\widehat {\mathsf{B}}_k u + \Lambda_2(\ell_k)\Gamma_{k,i}(\eta_{k,i-1}^2-\eta_{k,i}^1)\,,\quad i=r_k-1\\
\dot\eta_{k,r_k} = A_2\eta_{k,r_k} + C_2^\top\widehat {\mathsf{B}}_k u + \Lambda_2(\ell_k)\Gamma_{k,r_k}(\eta_{k,r_k-1}^2-\eta_{k,r_k}^1)
\ea}
where for $1\leq k\leq m$, $\eta_{k,0}^2=y_k$,}  $\eta_{k,j}=\col(\eta_{k,j}^1,\eta_{k,j}^2)\in\mathbb{R}^2$,
\[\ba{l}
D_2 = \mbox{diag}(0, 1)\in\mathbb{R}^{2\times2}\,,\quad \Lambda_2(\ell_k)=\mbox{diag}(\ell_k ,(\ell_k)^2)\in\mathbb{R}^{2\times2}\,
\ea\]
with $\ell_k$ being high-gain parameters to be determined, and $\Gamma_{k,i}=\col(\gamma_{k,i}^1, \gamma_{k,i}^2)$, $ i=1,\ldots,r_k$ are such that the matrix\footnote{Existence of $\gamma_{k,i}^1,\gamma_{k,i}^2>0$ leading to a Hurwitz $F_k$ is always guaranteed (see e.g. \cite{Daniele&Lorenzo,Wang&Marconi(AUT2017)} for explicit proofs and design methods).}
\[F_k = \qmx{F_{k,1} & D_2 & \cdots  & 0 & 0  \cr
           \Gamma_{k,2}B_2^{\top} & F_{k,2} &   \ddots &  0 & 0 \cr
           \cdots & \ddots & \ddots &  \ddots &   \vdots \cr
           0 & 0 &   \ddots & F_{k,r_k-1} & D_2 \cr
           0 & 0 &   \cdots &  \Gamma_{k,r_k}B_2^{\top} & F_{k,r_k} \cr
            }
\]
is Hurwitz,
with  $F_{k,i}=A_2-\Gamma_{k,i}C_2$.


Let $\hat\xi_k$ and $\hat\sigma_k$ denote the estimates of $\xi_k$ and $\sigma_k$, respectively, the expressions of which are given by
\beeq{\label{eq:estimates}
\hat\xi_k=(I_{r_k}\otimes C_2) \eta_k\,,\quad \hat\sigma_k=\eta_{k,r_k}^2\,,\quad k=1,\ldots,m\,.
}


Letting $\eta=\col(\eta_1,\ldots,\eta_m)\in\mathbb{R}^{2{\bf r}}$ with $\eta_k=\col(\eta_{k,1},\ldots,\eta_{k,r_k})\in\mathbb{R}^{2r_k}$, $\hat\xi=\col(\hat\xi_1,\ldots,\hat\xi_m)\in\mathbb{R}^{{\bf r}}$ and $\hat\sigma=\col(\hat\sigma_1,\ldots,\hat\sigma_m)\in\mathbb{R}^{m}$, instead of the ideal feedback control (\ref{ideal-u})-(\ref{v}), we propose an implementable feedback law as
\beeq{\label{actual-u}
u = -\widehat {\mathsf{B}}^{-1}\mbox{satv}_{l}(\hat\sigma + K\hat\xi)
}
where  the vector-valued saturation function $\mbox{satv}_{l}(\cdot)$ is introduced with
{ saturation level $l$ designed as
\beeq{\label{eq:l}
l \geq \sup_{x\in\Omega_{c+1}}\frac{1}{1-\mu_0}\left|a(x)+K\xi\right| + 1\,.
}
}
\vspace{-0.8em}
{
\begin{remark}
  With $\hat\xi, \hat\sigma$ being estimates of  $\xi$ and $\sigma$, respectively,   (\ref{actual-u}) can be regarded as an estimate of (\ref{u-ast}), by ignoring the effect of the saturation function $\mbox{satv}_{l}$ that is introduced to avoid the finite-time escape phenomenon \cite{Khalil1992}. If $(\hat\xi(t), \hat\sigma(t))$ asymptotically converge to $(\xi(t), \sigma(t))$, then the $u$ in (\ref{actual-u}) will converge to the ideal control  $u^\ast$ in (\ref{ideal-u})-(\ref{v}) as $u^\ast$ is the unique solution of (\ref{u-ast}). $\square$
\end{remark}
}
\begin{remark}
{
  The observer (\ref{ELPHGO-1})-(\ref{ELPHGO-2}) is comprised of $m$ high-gain observers, and  belongs to the class of extended observers \cite{Freidovich&Khalil(TAC2008),WangSCL}. Conventional  observers \cite{Daniele&Lorenzo,Wang&Isidori(TAC2017)}  estimate only the partial state, while (\ref{ELPHGO-1})-(\ref{ELPHGO-2}) provides estimates $\hat\xi_i,\hat\sigma_i$ for both the partial states $\xi_i$ and the perturbation term $\sigma_i$, $i=1,\ldots,m$, and can recover the performance of the ideal state feedback law (see e.g., \cite{Freidovich&Khalil(TAC2008),Wu&Isidori(2019)}).
  Besides, with the use of the extended high-gain observer to estimate both $\xi_i$ and $\sigma_i$, we can match each unavailable quantity in (\ref{eq:11}) with an estimate in the observer design, leading to a matched observer (\ref{ELPHGO-1})-(\ref{ELPHGO-2}). This eventually allows us to stabilize the resulting estimation error dynamics at an arbitrarily small neighborhood of the origin by adjusting the high-gain parameters, as in the forthcoming Proposition 1.
  }
  On the other hand, the design of (\ref{ELPHGO-1})-(\ref{ELPHGO-2}) also utilizes the low-power technique developed in \cite{Daniele&Lorenzo} for the purpose of solving the numerical implementation problem when $r_k$ is very large. As one can see, the high-gain parameter $\ell_k$ of each observer is powered up to only 2, rather than $r_k+1$ as in \cite{Freidovich&Khalil(TAC2008),WangSCL}, although the dimension of the observer increases to $2r_k$. $\square$
\end{remark}

\section{Stability Analysis}




\subsection{Change of Coordinates}
The aim of this subsection is to derive the estimation error dynamics, whose stability will be analyzed later.

%

For $1\leq k\leq m$, define the re-scaled estimation errors as
\beeq{\label{errors}
\ba{l}
\left\{
\ba{l}
\tilde\eta_{k,1}^1 = {(\ell_k)}^{r_k}(y_k-\eta_{k,1}^1)\,\\
\tilde\eta_{k,1}^2 = {(\ell_k)}^{r_k-1}(\xi_{k,2}-\eta_{k,1}^2)\,\\
\ea
\right.\\
\left\{
\ba{l}
\tilde\eta_{k,i}^1 = {(\ell_k)}^{r_k-i+1}(\xi_{k,i}-\eta_{k,i}^1)\\
\tilde\eta_{k,i}^2 = {(\ell_k)}^{r_k-i}(\xi_{k,i+1}-\eta_{k,i}^2)\\
\ea
\right., 1\leq i\leq r_k-1\\
\left\{
\ba{l}
\tilde\eta_{k,,r_k}^1 = {\ell_k}(\xi_{k,r_k}-\eta_{k,r_k}^1)\,\\
\tilde\eta_{k,,r_k}^2 = \sigma_k -\eta_{k,r_k}^2\,\\
\ea
\right.\\
\ea
}
with $\sigma_k$ being the $k$-th element of vector $\sigma$ defined in (\ref{sigma}).

Let  $\tilde\eta_{k,j}=\col(\tilde\eta_{k,j}^1,\tilde\eta_{k,j}^2)$, $\tilde\eta_k=\col(\tilde\eta_{k,1},\ldots,\tilde\eta_{k,r_k})$, for $1\leq k\leq m$ and $1\leq j \leq r_k$ and $\tilde\eta=\col(\tilde\eta_1,\ldots,\tilde\eta_m)$. Let
\beeq{\label{eq:t-sigma}
\tilde\sigma=\col(\tilde\eta_{1,r_1}^2,\tilde\eta_{2,r_2}^2,\ldots,\tilde\eta_{m,r_m}^2)\,.
}

\begin{remark}\label{remark-3}
From the bottom equation of (\ref{errors}), it can be seen that the extra estimate $\eta_{k,r_k}^2$ is used to estimate the \emph{entire} perturbation $\sigma_k$, which is motivated by \cite{Praly&Jiang(CDC1998),Praly&Jiang(1998)} and different from \cite{Freidovich&Khalil(TAC2008),WangSCL}. {As an example,  consider a simple case of system (\ref{partialform}) with $r_1=1$ and $r_2\geq 2$. Let $\bar\Lambda_{\ell_2}=\diag(\ell_2,\ldots,(\ell_2)^{r_2})$ and $\bar\Gamma_{2}=\col(\gamma_{21},\ldots,\gamma_{2,r_2})$. Following  \cite{Freidovich&Khalil(TAC2008),WangSCL}, we  can design  the extended high-gain observer as
\[\ba{l}
\dot {\hat\xi}_{1} = \hat{\sigma}_{1} + \widehat {\mathsf{B}}_1 u+ \ell_1\gamma_{11}(y_1-\hat\xi_{1})\,,\quad
\dot {\hat{\sigma}}_{1} =  (\ell_1)^2\gamma_{12}(y_1-\hat\xi_{1})\\
\dot {\hat\xi}_{2} = A_{r_2}\hat\xi_{2} + B_{r_2}(\hat{\sigma}_{2} + \widehat {\mathsf{B}}_2 u) + M_{2}^1(y)(\hat{\sigma}_{1} + \widehat {\mathsf{B}}_1 u)\,\\ \qquad  + \bar\Lambda_{\ell_2}\bar\Gamma_{2}(y_2-C_{r_2}\hat\xi_{2})\\
\dot {\hat{\sigma}}_{2} =  (\ell_2)^{r_2+1}\gamma_{2,r_2+1}(y_2-C_{r_2}\hat\xi_{2})\,,
\ea\]
with the extra estimates ${\hat{\sigma}}=\col({\hat{\sigma}}_{1},{\hat{\sigma}}_{2})$, and define the re-scaled estimation errors as
\[\ba{l}
\tilde \xi_{1} := \ell_1(\xi_{1}-\hat\xi_{1})\,,\quad \tilde \xi_{2}: = \mbox{diag}\big((\ell_2)^{r_2},\ldots,\ell_2\big)(\xi_{2}-\hat\xi_{2}),\\
\breve\sigma_i := \bar\sigma_i - \hat{\sigma}_i, \qquad i=1,2
\ea\]
with
\[\ba{l}
\bar\sigma_i = a_i(x) - [b_i(x)-\widehat {\mathsf{B}}_i]\widehat {\mathsf{B}}^{-1}\mbox{satv}_{l}(\hat{\sigma} + K\xi)\,\\
= \sigma_i - [b_i(x)-\widehat {\mathsf{B}}_i]\widehat {\mathsf{B}}^{-1}\big(\mbox{satv}_{l}(\hat{\sigma} + K\xi)-\mbox{satv}_{l}(\hat{\sigma} + K\hat\xi)\big)\,.
\ea\]
From the above definition of $\breve\sigma_i$, we note that the extra estimates ${\hat{\sigma}}_{i}$ are used to estimate $\bar\sigma_i$, which is different from the perturbation terms $\sigma_i=a_i(x) - [b_i(x)-\widehat {\mathsf{B}}_i]\widehat {\mathsf{B}}^{-1}\mbox{satv}_{l}(\hat\sigma + K\hat\xi)$ due to the gap between $\xi$ and its estimate $\hat\xi$. Thus, the extra estimates provided by the extended high-gain observer in \cite{Freidovich&Khalil(TAC2008),WangSCL} are  \emph{partial} estimates of the perturbations, in contrast to our observers which estimate the \emph{entire} perturbations by (\ref{errors}). Besides, by letting $\tilde\xi_2=\col(\tilde\xi_{21},\ldots,\tilde\xi_{2,r_2})$, the resulting $\tilde \xi_{2k}$-dynamics, $k=1,\ldots,r_2-1$ is given by
\[
\dot {\tilde\xi}_{2k} = \ell_2(\tilde\xi_{2,k+1}-\gamma_{2k}\tilde\xi_{21} ) + (\ell_2)^{r_2+1-k}\delta_{2,k+1}^1(y)\breve\sigma_1 + \mathcal{J}_k
\]
for some appropriate term $\mathcal{J}_k$, vanishing as the gap between $\sigma_1$ and $\bar\sigma_1$ vanishes and satisfying that $|\mathcal{J}_k|$ is lower-bounded by a quantity with $(\ell_2)^{j}|\tilde\xi_{2,k+j}|$, $j=1,\ldots,r_2-k$. As a result, there is no guarantee that the effect of $\mathcal{J}_k$  can be  dominated by adjusting the high-gain parameters when computing the time-derivative of the Lyapunov function for the re-scaled estimation error dynamics. This eventually prevents analyzing  closed-loop stability. In this paper, this $\mathcal{J}_k$ does not exist since $\eta_{i,r_i}^2$ or $\hat\sigma_i$ is used to estimate the \emph{entire} perturbation $\sigma_i$, leading to a \emph{complete} estimation of $u^\ast$. As will be shown later, this \emph{complete} estimation allows us to analyze  closed-loop stability by appropriately designing the high-gain parameters.}
However, as the perturbations $\sigma$ to be estimated depend on the control input $u$, the corresponding stability analysis is  non-trivial. $\square$
\end{remark}

Now we proceed to rewrite the closed-loop system (\ref{partialform})-(\ref{ELPHGO-1})-(\ref{ELPHGO-2})-(\ref{actual-u}) in coordinates $(x,\tilde\eta)$. With  (\ref{errors}), we rewrite (\ref{actual-u}) as
\beeq{\label{u-error}
u = -\widehat {\mathsf{B}}^{-1}\mbox{satv}_{l}(\sigma(x,u) + K\xi - \tilde\sigma - K(\Lambda_{\ell}^{-1}\otimes C_2)\tilde\eta)
}
where $\Lambda_{\ell}=\mbox{blkdiag}(\Lambda_{\ell_1},\ldots,\Lambda_{\ell_m})$ with $\Lambda_{\ell_k}=\diag(\ell_k^{r_k},\ldots,\ell_k)$, and $\sigma$, as defined in (\ref{sigma}), depends on $x,u$.  The following lemma shows that the equation (\ref{u-error}) has the unique solution $u$.
\begin{lemma}\label{lemma-psiu}
Set $\psi(u) = u +\widehat {\mathsf{B}}^{-1}\mbox{satv}_{l}(\sigma(x,u) + K\xi - \tilde\sigma - K(\Lambda_{\ell}^{-1}\otimes C_2)\tilde\eta)$. Then, with Assumption \ref{ass-b},  there exists a unique solution of the equation $\psi(u)=0$ for all $x\in\Omega_{c+1}$.
\end{lemma}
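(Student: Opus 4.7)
The plan is to reduce the equation $\psi(u)=0$ to a fixed-point problem in an auxiliary variable to which the Banach contraction principle applies. Since $\sigma(x,u)=a(x)+(b(x)-\widehat{\mathsf{B}})u$ is affine in $u$, I collect all $u$-independent terms into
\[
h \;:=\; a(x)+K\xi-\tilde\sigma-K(\Lambda_{\ell}^{-1}\otimes C_2)\tilde\eta \;\in\; \mathbb{R}^m,
\]
so $\psi(u)=0$ is equivalent to $u = -\widehat{\mathsf{B}}^{-1}\mbox{satv}_{l}\bigl(h+(b(x)-\widehat{\mathsf{B}})u\bigr)$. Introducing the auxiliary variable $w := h+(b(x)-\widehat{\mathsf{B}})u$ and using the identity $(b(x)-\widehat{\mathsf{B}})\widehat{\mathsf{B}}^{-1}=\Delta_b(x)$, the equation decouples into $u = -\widehat{\mathsf{B}}^{-1}\mbox{satv}_{l}(w)$ paired with the self-consistent equation
\[
w \;=\; h \;-\; \Delta_b(x)\,\mbox{satv}_{l}(w) \;=:\; \Psi(w).
\]

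Next, I show $\Psi:\mathbb{R}^m\to\mathbb{R}^m$ is a contraction. By construction of $\mbox{satv}_{l}$, its Jacobian equals the identity on $\{|s|\le l\}$ and satisfies $0<d\,\mbox{satv}_l(s)/ds<1$ otherwise, hence $|\nabla\mbox{satv}_{l}(s)|\leq 1$ for all $s\in\mathbb{R}^m$. The mean-value inequality then gives $|\mbox{satv}_{l}(w_1)-\mbox{satv}_{l}(w_2)|\leq|w_1-w_2|$. Combining this with Assumption~\ref{ass-b}, which supplies $|\Delta_b(x)|\leq\mu_0<1$ for every $x\in\Omega_{c+1}$, I obtain
\[
|\Psi(w_1)-\Psi(w_2)| \;\leq\; \mu_0\,|w_1-w_2|\,,
\]
so $\Psi$ is a contraction on the complete metric space $(\mathbb{R}^m,|\cdot|)$. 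The Banach fixed-point theorem therefore yields a unique $w^{\star}\in\mathbb{R}^m$ with $\Psi(w^{\star})=w^{\star}$, and setting $u^{\star}:=-\widehat{\mathsf{B}}^{-1}\mbox{satv}_{l}(w^{\star})$ produces a solution of $\psi(u)=0$.

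For uniqueness of $u$, I argue in reverse: if $\psi(u)=0$, then the associated $w:=h+(b(x)-\widehat{\mathsf{B}})u$ satisfies $w=\Psi(w)$, hence $w=w^{\star}$ by uniqueness of the fixed point, and therefore $u=-\widehat{\mathsf{B}}^{-1}\mbox{satv}_{l}(w^{\star})=u^{\star}$. No substantial obstacle arises in this argument; the only delicate point is choosing the correct auxiliary variable $w$ so that the affine $u$-dependence of $\sigma$ inside $\mbox{satv}_{l}$ is converted into the explicit multiplicative factor $\Delta_b(x)$, whereupon the smallness condition $\mu_0<1$ of Assumption~\ref{ass-b} and the nonexpansiveness of $\mbox{satv}_{l}$ combine to deliver a contraction constant that is uniform in $h$ (and hence in $\xi,\tilde\sigma,\tilde\eta$). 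The restriction $x\in\Omega_{c+1}$ enters only through the bound $|\Delta_b(x)|\leq\mu_0$.
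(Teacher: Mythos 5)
Your proof is correct, but it takes a genuinely different route from the paper's. The paper argues via Hadamard's global inverse function theorem: it observes that $\psi$ is proper and that its Jacobian $I_m+\widehat {\mathsf{B}}^{-1}\nabla\mbox{satv}_{l}\,(b(x)-\widehat {\mathsf{B}})$ is uniformly nonsingular (both facts resting on $|\Delta_b(x)|\le\mu_0<1$ and $|\nabla\mbox{satv}_{l}|\le 1$), so $\psi$ is a globally invertible map of $\mathbb{R}^m$ and in particular $\psi(u)=0$ has exactly one solution. You instead pass to the auxiliary variable $w$ equal to the argument of the saturation, note that the affine $u$-dependence of $\sigma$ becomes the multiplicative factor $\Delta_b(x)$, and arrive at the fixed-point equation $w=h-\Delta_b(x)\,\mbox{satv}_{l}(w)$, whose right-hand side is a $\mu_0$-contraction by the nonexpansiveness of $\mbox{satv}_{l}$; Banach's theorem then gives existence and uniqueness, and your back-and-forth translation between $u$ and $w$ is airtight (in particular, $b(x)-\widehat {\mathsf{B}}$ need not be invertible, and you correctly recover $u$ from $w$ through the saturation rather than by inverting that factor). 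Your argument is more elementary and self-contained---it dispenses with the properness and Jacobian-nonsingularity checks and the appeal to Hadamard's theorem, and it additionally furnishes a convergent Picard iteration for computing $u=\pi(x,\tilde\eta)$. What the paper's route buys in exchange is that $\psi$ is a global diffeomorphism, so regularity of the solution map $\pi(x,\tilde\eta)$ in its arguments comes essentially for free, which is tacitly used later when the closed loop is rewritten in the $(x,\tilde\eta)$ coordinates; with your approach one would recover that regularity from the uniform contraction (or the implicit function theorem) if needed.
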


{
With Assumption \ref{ass-b}, some simple calculations show that $\psi(\cdot)$ is proper (i.e., radially unbounded, namely $|\psi(u)|\rightarrow\infty$ as $|u|\rightarrow\infty$) and its Jacobian $\frac{\partial\psi(u)}{\partial u}$ is uniformly nonsingular for all $u\in\mathbb{R}^m$, which, according to Hadamard's Theorem \cite{Hadamard1906}, indicates that  the function $\psi$ has a globally defined inverse. This in turn proves  Lemma \ref{lemma-psiu}.} We omit the corresponding details. Then, recalling (\ref{eq:x}) and (\ref{eq:t-sigma}), the unique solution $u$ of (\ref{u-error}) is a function of $(x,\tilde\eta)$, and we denote it as $u=\pi(x,\tilde\eta)$.


Thus, we can rewrite (\ref{partialform}) with (\ref{actual-u}) in coordinates $(x,\tilde\eta)$ as
\beeq{\label{xi-3}
\ba{l}
\dot x_0 = f_0(x_0,\xi,\pi(x,\tilde\eta))\,\\
\dot \xi = (\mathbf{A} - \mathbf{B}(y)K)\xi + \mathbf{B}(y)[\phi(x,\tilde\eta)-\mbox{satv}_l\left(\phi(x,\tilde\eta) \right.\,\\ \qquad \qquad \left.- \tilde\sigma - K(\Lambda_{\ell}^{-1}\otimes C_2)\tilde\eta\right)]\\
\ea}
in which $\mathbf{A} = \mbox{blkdiag}(A_{r_1},\ldots,A_{r_m})$, and
\[
\mathbf{B}(y) = \qmx{B_{r_1} & 0 &\cdot & 0 \cr M_{2}^1(y) & B_{r_2} &\cdot & 0 \cr \cdot & \cdot &\cdot & \cdot \cr M_{m}^1(y) & M_{m}^2(y) &\cdot & B_{r_m} \cr}\,
\]
\beeq{
\phi(x,\tilde\eta)=K\xi+a(x)+(b(x)-\widehat {\mathsf{B}})\pi(x,\tilde\eta) \label{eq-phi}\,.
}

Taking the derivative of the estimation errors in (\ref{errors}) yields
{
\beeq{\label{tilde-eta-1}\ba{l}
\dot {\tilde\eta}_{1} = \ell_1 F_{1}{\tilde\eta}_{1}  + B_{2r_1} \dot\sigma_1  \,,
\ea}
}
and for $k=2,\ldots,m$,
{
\beeq{\label{tilde-eta-k}\ba{l}
\dot {\tilde\eta}_{k} =  \ell_k F_{k}{\tilde\eta}_{k} + \sum_{j=1}^{k-1}L_{kj}(\ell_k,y)B_{2r_j}^\top{\tilde\eta}_{j} + B_{2r_k} \dot\sigma_k\,
\ea}
}
with
\[
L_{kj}(\ell_k,y) = \qmx{\mathbf{0}_{2r_j-3}\cr (\ell_k)^{r_k-r_j+1}\delta_{k,r_j+1}^j(y)\mathbf{1}_2\cr\cdots\cr (\ell_k)^2\delta_{k,r_k}^j(y)\mathbf{1}_2\cr \mathbf{0}_3}\,.
\]

Putting all bottom equations of the vector equations (\ref{tilde-eta-1}) and (\ref{tilde-eta-k}) together, and recalling (\ref{eq:t-sigma}), we have
\beeq{\label{tildesigma-1}
\dot{\tilde\sigma} = \mathbf{H}L_{\ell}\tilde\eta + \dot\sigma
}
where $L_{\ell}=\diag(\ell_1 I_{r_1},\ldots,\ell_mI_{r_m})$, and
\beeq{\ba{l}\label{H}
\mathbf{H} = \mbox{blkdiag}(H_1,\ldots,H_m)\,,\\ H_k = \qmx{0 & \cdots & 0 & \gamma_{k,r_k}^2 & - \gamma_{k,r_k}^2 & 0}\in\mathbb{R}^{2r_k}.
\ea}

Recalling (\ref{sigma}) and (\ref{actual-u}), we observe that
\[\ba{rcl}
\sigma
= a(x) - \Delta_b(x)\mbox{satv}_{l}(\sigma + K\xi - \tilde\sigma \, - K(\Lambda_{\ell}^{-1}\otimes C_2)\tilde\eta)\,\\
\ea\]
whose derivative, by setting $\Delta_0 := \Delta_b(x)\nabla\mbox{satv}_{l}$, is given by
\[\ba{l}
\dot\sigma = \dot a(x) -\dot b(x)\widehat {\mathsf{B}}^{-1}\mbox{satv}_{l}(\sigma + K\xi - \tilde\sigma  - K(\Lambda_{\ell}^{-1}\otimes C_2)\tilde\eta)\,\\ - \Delta_0[\dot\sigma + K\dot\xi - \dot{\tilde\sigma} - K(\Lambda_{\ell}^{-1}\otimes C_2)\dot{\tilde\eta})]\,.\\
\ea\]
By adding  $\Delta_0\dot\sigma$ on both sides of the above equation and setting
\beeq{\ba{l}\label{Delta-1}
\Delta_1=\dot a(x) -\dot b(x)\widehat {\mathsf{B}}^{-1}\mbox{satv}_{l}(\phi(x,\tilde\eta) - \tilde\sigma \,\\ \qquad - K(\Lambda_{\ell}^{-1}\otimes C_2)\tilde\eta) - \Delta_0K\dot\xi\,,
\ea}
we have
\beeq{\label{dsigma-2}\ba{l}
(I_m+\Delta_0)\dot\sigma = \Delta_1 + \Delta_0(\dot{\tilde\sigma} + K(\Lambda_{\ell}^{-1}\otimes C_2)\dot{\tilde\eta})\,.\\
\ea}
We then observe that $\Delta_0$ and $\Delta_1$ have the following properties.
\begin{lemma}\label{lemma-Delta}
With Assumption \ref{ass-b}, for all $x\in\Omega_{c+1}$,
\begin{itemize}
  \item[(i)] $|\Delta_0|\leq\mu_0<1$, and $I_m+\Delta_0$ is invertible,
  \item[(ii)] there exists a constant $\delta_1>0$, independent of $\ell=\col(\ell_1,\ldots,\ell_k)$ such that $|\Delta_1|\leq\delta_1$ holds for all $\tilde\eta\in\mathbb{R}^{2\mathbf{r}}$.
\end{itemize}
\end{lemma}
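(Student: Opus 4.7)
The plan is to handle the two items separately, treating (i) as an algebraic consequence of Assumption \ref{ass-b} and the properties of the saturation map, and (ii) as a uniform-boundedness argument that relies crucially on the fact that every unbounded quantity in the definition of $\Delta_1$ enters through $\mbox{satv}_l$ or is multiplied by a uniformly bounded matrix.

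For (i), I would start from the definition of $\mbox{satv}_l$ in the Notations, which gives $0<\frac{d\,\mbox{satv}_l(s)}{ds}\le 1$ on each component, so the diagonal Jacobian $\nabla\mbox{satv}_l$ has induced $2$-norm at most one. Combining this with $|\Delta_b(x)|\le\mu_0<1$ on $\Omega_{c+1}$ from Assumption \ref{ass-b} gives $|\Delta_0|=|\Delta_b(x)\nabla\mbox{satv}_l|\le\mu_0<1$ at once. Invertibility of $I_m+\Delta_0$ is then the standard Neumann-series consequence of $\|\Delta_0\|<1$.

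For (ii), I would inspect (\ref{Delta-1}) term by term and show each piece admits a bound that depends only on the compact set $\Omega_{c+1}$ and on the fixed data $\widehat{\mathsf{B}}$, $K$, $l$, $\epsilon_0$, $\mu_0$. The key observation is that although $\tilde\eta$ and $\ell$ appear explicitly inside the saturation argument $\phi(x,\tilde\eta)-\tilde\sigma-K(\Lambda_\ell^{-1}\otimes C_2)\tilde\eta$, the outer $\mbox{satv}_l$ caps its value componentwise by $l+\epsilon_0$, so the second summand in (\ref{Delta-1}) is controlled by $|\dot b(x)|\,|\widehat{\mathsf{B}}^{-1}|\,(l+\epsilon_0)\sqrt{m}$. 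For the first summand I would use the chain rule $\dot a(x)=\frac{\partial a}{\partial x}\dot x$ and argue that $\dot x=\col(\dot x_0,\dot\xi)$ is uniformly bounded on $\Omega_{c+1}$: on that compact set $x$ is bounded, the closed-loop input $u=\pi(x,\tilde\eta)=-\widehat{\mathsf{B}}^{-1}\mbox{satv}_l(\cdot)$ is uniformly bounded (again because of the saturation), and $f_0$ together with the $\xi$-dynamics in (\ref{xi-3})---whose multipliers in $\mathbf{B}(y)$ are bounded by Assumption \ref{ass-delta}---depend smoothly on these bounded quantities. The same argument applied to $b$ bounds $\dot b(x)$, and the last summand $\Delta_0K\dot\xi$ is controlled by $\mu_0|K|$ times the bound on $|\dot\xi|$ obtained above.

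Taking $\delta_1$ to be the supremum over $x\in\Omega_{c+1}$ of the sum of these three bounds yields the claim, and by construction $\delta_1$ depends on neither $\tilde\eta$ nor $\ell$. The only subtle point---and the one worth stating explicitly in the proof---is the mechanism by which the $\ell$-dependence is neutralized: the scaling matrix $\Lambda_\ell^{-1}\otimes C_2$ would a priori inject $\ell$ into $\Delta_1$, but since it appears exclusively inside the argument of $\mbox{satv}_l$, the cap $l+\epsilon_0$ erases that dependence. I expect the routine parts (smoothness plus compactness implying uniform bounds on $\dot a$, $\dot b$, $\dot\xi$) to be straightforward; the crux is recognising that the \emph{complete}-estimation design has been arranged so that every unbounded observer quantity in $\Delta_1$ is shielded by a saturation.
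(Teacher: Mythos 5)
Your proposal is correct and follows the same route the paper indicates for its (omitted) proof: part (i) from $|\Delta_b(x)|\le\mu_0$ together with the diagonal Jacobian $\nabla\mbox{satv}_l$ having entries in $(0,1]$, and part (ii) from the explicit expression of $\Delta_1$, where every $\tilde\eta$- and $\ell$-dependent quantity is shielded by the saturation cap $l+\epsilon_0$ and all remaining terms are bounded by smoothness on the compact set $\Omega_{c+1}$. Your explicit identification of how the saturation neutralizes the $\Lambda_\ell^{-1}\otimes C_2$ term is exactly the mechanism the paper relies on but does not spell out.
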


The proof of Lemma \ref{lemma-Delta}.(i) is straightforward using Assumption \ref{ass-b} and the fact that $\nabla\mbox{satv}_{l}$ is a diagonal matrix whose entries are less than one, while  the proof of (ii) can be easily concluded by deriving the explicit expression of $\Delta_1$ and is also omitted.
Using the first part of Lemma \ref{lemma-Delta}, (\ref{dsigma-2}) implies
\beeq{\label{d-sigma}
\dot\sigma = (I_m+\Delta_0)^{-1}[\Delta_1 + \Delta_0(\dot{\tilde\sigma} + K(\Lambda_{\ell}^{-1}\otimes C_2)\dot{\tilde\eta})]\,.
}

It can be verified that $(\Lambda_{\ell}^{-1}\otimes C_2){\tilde\eta}$ is independent of $\tilde\eta_{k,r_k}^2$, $k=1,\ldots,m$, and thus $(\Lambda_{\ell}^{-1}\otimes C_2)\dot{\tilde\eta}$ can be expressed as a linear function of $\tilde\eta$ from (\ref{tilde-eta-1})-(\ref{tilde-eta-k}). Namely,
\beeq{\label{c-d-eta}
(\Lambda_{\ell}^{-1}\otimes C_2)\dot{\tilde\eta} = J(\ell)\tilde\eta
}
where $J(\ell)$ is a matrix dependent of $\ell$. To be precise, bearing in mind the definition of $\Lambda_{\ell}$ given after (\ref{u-error}),  $J(\ell)$ has the property that for any $\ell_i \geq 1$, $i=1,\ldots,m$, there exists $\delta_2>0$, independent of $\ell_i$'s, such that
\beeq{\label{ineq-J}
|J(\ell)|\leq \delta_2\,.
}

Substituting (\ref{d-sigma}) and (\ref{c-d-eta}) into (\ref{tildesigma-1}), we obtain
\[\ba{l}
[I_m-(I_m+\Delta_0)^{-1}\Delta_0]\dot{\tilde\sigma}\,\\ = \mathbf{H}L_{\ell}\tilde\eta + (I_m+\Delta_0)^{-1}[\Delta_1 + \Delta_0KJ(\ell)\tilde\eta)]\,.
\ea\]
By
$
[I_m-(I_m+\Delta_0)^{-1}\Delta_0] = (I_m+\Delta_0)^{-1}\,,
$
we further obtain
\beeq{\ba{l}
\dot{\tilde\sigma} = (I_m+\Delta_0)\mathbf{H}L_{\ell}\tilde\eta + \Delta_1 + \Delta_0KJ(\ell)\tilde\eta\,.
\ea}

Thus, the equations of the re-scaled estimation errors (\ref{tilde-eta-1}) and (\ref{tilde-eta-k}) can be compactly described by
\beeq{\label{tilde-eta}\ba{l}
\dot{\tilde\eta}= [\mathbf{F}+\mathbf{G}\Delta_0\mathbf{H}+\mathbf{G}\Delta_0KJ(\ell)L_{\ell}^{-1}]L_{\ell}\tilde\eta  + \mathbf{G} \Delta_1
\ea}
where
\[\ba{l}
\mathbf{F} = \qmx{ F_1 & 0 & \cdots & 0  \cr
                        \frac{1}{\ell_1}L_{21}(\ell_2,y)B_{2r_1}^{\top} &  F_2 & \cdots & 0  \cr
                        \vdots & \cdots & \ddots & \cdot  \cr
                        \frac{1}{\ell_1}L_{m1}(\ell_m,y)B_{2r_1}^{\top} & \frac{1}{\ell_2}L_{m2}(\ell_m,y)B_{2r_2}^{\top} & \cdots & \,F_m  \cr
                }\,\\
\mathbf{G} = \mbox{blkdiag}(B_{2r_1} ,\ldots, B_{2r_m})\,.
\ea\]
By Assumption \ref{ass-delta}, it is clear that given $\ell_i\geq 1$, there exists $\iota_{ij}>0$, independent of $\ell_i$ such that
\beeq{\label{eq:Lij}
|L_{ij}(\ell_i,y)|\leq \iota_{ij}\ell_i^{r_i-r_j+1}\,.
}

\subsection{Stability Analysis of the Estimation Error Dynamics (\ref{tilde-eta})}

Before presenting the main result of this subsection, two fundamental lemmas, proven in Appendix, are given below.
\begin{lemma}
\label{lemma-3}
Suppose Assumption \ref{ass-b} holds. There exist  symmetric positive definite matrices $P_i$ and positive constants $\lambda_i>0$, $i=1,\ldots,m$ such that
\[
\dst\sum_{i=1}^m \tilde\eta_i^{\top} (P_i F_i+F_i^{\top}P_i) \tilde\eta_i + 2\tilde\eta^{\top}P\mathbf{G}\Delta_0\mathbf{H}\tilde\eta \leq -\sum_{i=1}^m\lambda_i |\tilde\eta_i|^2
\]
 with $P=\blkdiag(P_1,\ldots,P_m)$, holds for all $x\in\Omega_{c+1}$.
\end{lemma}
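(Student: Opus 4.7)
The plan is to combine a block-diagonal Lyapunov construction for the Hurwitz blocks $F_i$ with a Young-type bound on the cross-coupling term, leveraging the strict inequality $\mu_0<1$ from Assumption \ref{ass-b} as the crucial lever.

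First, since each $F_i$ is Hurwitz by design of the gains $\gamma_{k,i}^j$ (see the footnote after the definition of $F_k$), I would invoke the converse Lyapunov theorem to pick, for each $i=1,\ldots,m$, symmetric positive definite matrices $P_i,Q_i$ satisfying $P_iF_i+F_i^\top P_i=-Q_i$, and then assemble $P=\blkdiag(P_1,\ldots,P_m)>0$. The diagonal contribution becomes $\sum_i\tilde\eta_i^\top(P_iF_i+F_i^\top P_i)\tilde\eta_i=-\tilde\eta^\top Q\tilde\eta$ with $Q=\blkdiag(Q_1,\ldots,Q_m)$.

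Next, I would rewrite the cross term by introducing $v_i:=B_{2r_i}^\top P_i\tilde\eta_i$ and $w_i:=H_i\tilde\eta_i$, so that $2\tilde\eta^\top P\mathbf{G}\Delta_0\mathbf{H}\tilde\eta=2v^\top\Delta_0 w$ with $v,w\in\mathbb{R}^m$. Using Lemma \ref{lemma-Delta}(i), which gives $|\Delta_0|\leq\mu_0<1$, Young's inequality yields
\[
|2v^\top\Delta_0 w|\leq\mu_0(|v|^2+|w|^2)=\mu_0\sum_{i=1}^m\tilde\eta_i^\top(P_iB_{2r_i}B_{2r_i}^\top P_i+H_i^\top H_i)\tilde\eta_i.
\]
The desired inequality therefore reduces to showing that the $P_i>0$ can be chosen so that $P_iF_i+F_i^\top P_i+\mu_0 P_iB_{2r_i}B_{2r_i}^\top P_i+\mu_0 H_i^\top H_i+\lambda_i I\leq 0$ for some $\lambda_i>0$.

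The hard part will be establishing solvability of this Riccati-type inequality uniformly over $x\in\Omega_{c+1}$. I would exploit the structural identity $H_i=B_{2r_i}^\top F_i$, which follows by direct inspection of the last row of $F_i$ (whose only nonzero entries are $\gamma_{i,r_i}^2$ and $-\gamma_{i,r_i}^2$ at positions $2r_i-2$ and $2r_i-1$). The identity yields $H_i(sI-F_i)^{-1}B_{2r_i}=sB_{2r_i}^\top(sI-F_i)^{-1}B_{2r_i}-1$, whose value at $s=0$ is $-1$ and whose $H_\infty$ norm can be shown not to exceed $1$. Hence the strict bound $\mu_0<1$ translates into the bounded-real condition $\mu_0\|H_i(sI-F_i)^{-1}B_{2r_i}\|_\infty<1$, and the bounded-real lemma then supplies the required $P_i>0$ and $\lambda_i>0$, completing the argument.
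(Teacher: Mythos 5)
Your overall strategy is, at its core, the same as the paper's: both arguments reduce the lemma to a bounded-real ($\mathcal{H}_\infty$) condition on the scalar transfer function $\mathcal{G}_i(s)=H_i(sI-F_i)^{-1}B_{2r_i}$ closed through the uncertainty $\Delta_0$, and both use $|\Delta_0|\le\mu_0<1$ as the margin. The paper gets there by changing coordinates $\chi_i=T_i\tilde\eta_i$ to a lower-triangular realization, reading off $\mathcal{G}_i(s)=-\big(\Pi_{j}\gamma_{i,j}^2\big)/\mathcal{P}_i(s)$ with $\mathcal{P}_i(0)=\Pi_{j}\gamma_{i,j}^2$, and applying the dissipation-inequality form of the Bounded Real Lemma before summing and using $|u|\le\mu_0|y|$. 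You stay in the original coordinates, use the (correct, and nicely observed) identity $H_i=B_{2r_i}^\top F_i$ to get $\mathcal{G}_i(s)=sB_{2r_i}^\top(sI-F_i)^{-1}B_{2r_i}-1$ and hence $\mathcal{G}_i(0)=-1$, split the cross term by Young's inequality, and arrive at the Riccati inequality $P_iF_i+F_i^\top P_i+\mu_0P_iB_{2r_i}B_{2r_i}^\top P_i+\mu_0H_i^\top H_i+\lambda_iI\le0$, which after rescaling $P_i$ is exactly the strict bounded-real inequality for gain $1/\mu_0$. These are equivalent packagings of the same argument.

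The genuine gap is the clause ``whose $\mathcal{H}_\infty$ norm can be shown not to exceed $1$.'' That is the entire analytic content of the lemma, and it does \emph{not} follow from $F_i$ being Hurwitz. Writing $\mathcal{G}_i(s)=-\mathcal{P}_i(0)/\mathcal{P}_i(s)$ with $\mathcal{P}_i$ the Hurwitz monic polynomial of $F_i$, one has $\|\mathcal{G}_i\|_{\mathcal{H}_\infty}=\sup_{\omega}\mathcal{P}_i(0)/|\mathcal{P}_i(j\omega)|$, which exceeds $1$ whenever $|\mathcal{P}_i(j\omega)|<\mathcal{P}_i(0)$ at some frequency; lightly damped complex pole pairs do exactly this ($s^2+\varepsilon s+1$ gives $|\mathcal{P}(j)|=\varepsilon\ll 1=\mathcal{P}(0)$). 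Moreover, since $\mathcal{G}_i(0)=-1$ forces $\|\mathcal{G}_i\|_{\mathcal{H}_\infty}\ge1$ and Assumption \ref{ass-b} allows $\mu_0$ arbitrarily close to $1$, the strict condition $\mu_0\|\mathcal{G}_i\|_{\mathcal{H}_\infty}<1$ effectively requires $\|\mathcal{G}_i\|_{\mathcal{H}_\infty}=1$, i.e.\ the supremum attained at $\omega=0$. This is a property of the \emph{gain selection}, not of Hurwitzness: it holds, for instance, when the $\gamma_{i,j}^1,\gamma_{i,j}^2$ place the spectrum of $F_i$ on the negative real axis, since then $|\mathcal{P}_i(j\omega)|=\prod_k|j\omega+a_k|\ge\prod_k a_k=\mathcal{P}_i(0)$. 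The paper's proof relies on the same norm bound (its ``$|\mathcal{G}_i(\infty)|=1<1/\mu_0$'' is to be read as the $\mathcal{H}_\infty$ norm, secured by the cited low-power gain designs); your proof must likewise either impose this extra requirement on the $\Gamma_{k,i}$ or verify it for the chosen gains --- the converse Lyapunov step and the footnote's Hurwitzness guarantee in your first paragraph are not enough.
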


Let $\lambda_{\min}=\min\{\lambda_1,\ldots,\lambda_m\}$, $\varrho_0={\lambda_{\min}}|P|^2\mu_0^2|K|^2\delta_2^2/8$ and $\varrho_1={8}\delta_1^2|P|^2/{\lambda_{\min}}$. We present the following lemma.
\begin{lemma}\label{lemma-5}
There exist constants $g_i>0$, $i=1,\ldots,m$, independent of $\kappa$, and $\theta^\ast>1$ such that for all $\kappa \geq\theta^\ast$,
\[\ba{l}
\frac{\lambda_m}{2}\ell_m^2-\varrho_0 \geq \kappa \ell_m\\
\frac{\lambda_i}{4}\ell_i^2-\sum_{j=i+1}^m \frac{2(j-1)|P_j|^2\iota_{ji}^2}{\lambda_j}\ell_j^{2(r_j-r_i+1)}-\varrho_0 \geq \kappa \ell_i
\ea\]
hold with
\beeq{\label{c-ell}\ba{l}
\ell_m=g_m\kappa \,\\
\ell_{i}= g_{i}\cdot(\ell_{i+1})^{r_{i+1}-r_{i}+1}\,,\quad \mbox{ for } 1\leq i \leq m-1\,.
\ea}

\end{lemma}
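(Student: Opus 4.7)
The plan is to proceed by downward induction on $i$, from $i=m$ to $i=1$, fixing each $g_i$ as a function only of the previously selected $g_{i+1},\ldots,g_m$ and then absorbing all lower-order $\kappa$-terms into a single threshold $\theta^*$ at the end. For the base case $i=m$, substituting $\ell_m = g_m\kappa$ into the first inequality reduces it to $g_m(\lambda_m g_m/2-1)\kappa^2 \geq \varrho_0$; taking $g_m := 4/\lambda_m$ makes the parenthetical equal $1$, so the inequality holds whenever $\kappa^2 \geq \varrho_0/g_m$.

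For the inductive step, assume $g_{i+1},\ldots,g_m$ have already been fixed. Unrolling the recursion yields $\ell_j = G_j\kappa^{p_j}$ for $j \geq i+1$, where $p_j := \prod_{k=j+1}^{m}(r_k - r_{k-1}+1)$, $p_m=1$, and $G_j$ is a positive constant built from $g_j,\ldots,g_m$. The decisive estimate is
\[
p_j(r_j-r_i+1) \leq p_i \qquad \text{for every } i<j,
\]
which, after dividing by $p_j$, is exactly the Weierstrass/Bernoulli inequality $\prod_{k=i+1}^{j}(1+a_k)\geq 1+\sum_{k=i+1}^{j}a_k$ applied to the nonnegative quantities $a_k := r_k - r_{k-1}$ (nonnegativity being built into the multiplier structure (\ref{eq:multiplier})). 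Consequently $\ell_j^{2(r_j-r_i+1)} \leq G_j^{2(r_j-r_i+1)}\kappa^{2p_i}$ for all $\kappa \geq 1$, so the full sum in the lemma is bounded by $C_i'\kappa^{2p_i}$ for some constant $C_i'$ depending only on $g_{i+1},\ldots,g_m$ and not on $\kappa$.

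Writing $\ell_i = g_i\, G_{i+1}^{r_{i+1}-r_i+1}\kappa^{p_i}$, the inequality for index $i$ becomes
\[
\left(\frac{\lambda_i}{4}\,g_i^2\, G_{i+1}^{2(r_{i+1}-r_i+1)} - C_i'\right)\kappa^{2p_i} \geq \varrho_0 + g_i\, G_{i+1}^{r_{i+1}-r_i+1}\kappa^{1+p_i}.
\]
Since $p_i \geq 1$ one has $1+p_i \leq 2p_i$, so both the constant $\varrho_0$ and the $\kappa^{1+p_i}$ term are dominated by a multiple of $\kappa^{2p_i}$ for $\kappa \geq 1$. Choosing $g_i$ large enough that the bracket on the left is at least $2C_i'$ then delivers a finite threshold $\theta_i^*$ beyond which the inequality holds, and the proof is finished by setting $\theta^* := \max_i \theta_i^*$.

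The main obstacle is the growth-rate comparison $p_j(r_j-r_i+1) \leq p_i$; without it, some term $\ell_j^{2(r_j-r_i+1)}$ could in principle outgrow $\ell_i^2$ as $\kappa \to \infty$, destroying the gain-margin choice of $g_i$. Once this single multiplicative inequality is secured, the remainder of the argument is routine polynomial-degree bookkeeping in $\kappa$.
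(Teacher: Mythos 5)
Your proposal follows essentially the same route as the paper's proof: a downward recursion fixing $g_m,\ldots,g_1$, unrolling (\ref{c-ell}) to write $\ell_i$ as a constant times $\kappa^{p_i}$ with $p_i=\Pi_{k=i+1}^{m}(r_k-r_{k-1}+1)$, using the degree comparison $p_j(r_j-r_i+1)\leq p_i$ (which the paper merely asserts and you justify via the Weierstrass product inequality, using $r_1\leq\cdots\leq r_m$) to bound every competing term by a multiple of $\kappa^{2p_i}$, and then taking $g_i$ large so the quadratic-in-$g_i$ leading coefficient wins. The only quibble is your sufficient condition ``bracket $\geq 2C_i'$'': in the edge case $1+p_i=2p_i$ (i.e.\ $p_i=1$) you must also choose $g_i$ large enough that the quadratic term exceeds the $g_i$-linear coefficient $g_iG_{i+1}^{r_{i+1}-r_i+1}$ coming from the $\kappa\ell_i$ term, exactly as the paper does by requiring $\mu_i=\omega_{i,i}-\sum_j\omega_{i,j}-\omega_{i,0}>0$; this is a one-line fix, not a gap in the approach.
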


Bearing in mind the above two lemmas, the stability property of (\ref{tilde-eta}) is formulated as below.
\begin{proposition}\label{lemma-4}
Given any $\tau_{\max}>0$ and $R>0$, suppose $x\in\Omega_{c+1}$ for all $t\in[0,\tau_{\max})$, and the initial conditions $|\eta(0)|\leq R$.
Let $g_i$, $\ell_i$, $i=1,\ldots,m$ and $\theta^\ast$ be as in Lemma \ref{lemma-5}. Then for every $\tau_2<\tau_{\max}$ and every $\epsilon>0$, there exists a $\kappa^{\ast}\geq\theta^\ast$ such that for all $\kappa \geq\kappa^{\ast}$, $|\tilde\eta(t)|\leq 2\epsilon$ for all $t\in[\tau_2,\tau_{\max})$.
\end{proposition}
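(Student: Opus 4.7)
The plan is to analyze the estimation error dynamics~(\ref{tilde-eta}) via the weighted quadratic Lyapunov function
\[
V(\tilde\eta) \;:=\; \tilde\eta^\top L_\ell P\, \tilde\eta \;=\; \sum_{i=1}^m \ell_i\, \tilde\eta_i^\top P_i \tilde\eta_i,
\]
with $P=\mathrm{blkdiag}(P_1,\ldots,P_m)$ and $\lambda_i$ supplied by Lemma~\ref{lemma-3}. Since $L_\ell$ acts as the scalar $\ell_i I$ on the $i$-th block, it commutes with $P$. The reason for weighting $V$ by $L_\ell$ rather than using $\tilde\eta^\top P\tilde\eta$ is that it upgrades the dissipation to a quadratic-in-$\ell_i$ bound (coefficient $\ell_i^2|\tilde\eta_i|^2$), which is what is needed to dominate the $\ell_i^{r_i-r_j+1}$ growth of the off-diagonal cross-coupling $\mathbf{F}_o:=\mathbf{F}-\mathrm{blkdiag}(F_1,\ldots,F_m)$.

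Differentiating $V$ along~(\ref{tilde-eta}) produces five groups of terms. The diagonal piece from $\mathbf{F}_d$ together with the $\mathbf{G}\Delta_0\mathbf{H}L_\ell$-contribution can be handled simultaneously by substituting $L_\ell\tilde\eta$ for the generic argument of Lemma~\ref{lemma-3} (legitimate since that inequality is a quadratic bound holding uniformly in its argument for every admissible $\Delta_0$), yielding
\[
\sum_i \ell_i^2\tilde\eta_i^\top(P_iF_i+F_i^\top P_i)\tilde\eta_i + 2\tilde\eta^\top L_\ell P\mathbf{G}\Delta_0\mathbf{H}L_\ell\tilde\eta \;\leq\; -\sum_i\lambda_i\ell_i^2|\tilde\eta_i|^2 .
\]
The remaining three terms are then treated by Young's inequality using $|L_{ij}(\ell_i,y)|\leq\iota_{ij}\ell_i^{r_i-r_j+1}$ from~(\ref{eq:Lij}), $|\Delta_0|\leq\mu_0$ from Lemma~\ref{lemma-Delta}(i), $|J(\ell)|\leq\delta_2$ from~(\ref{ineq-J}), and $|\Delta_1|\leq\delta_1$ from Lemma~\ref{lemma-Delta}(ii). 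The $\mathbf{F}_o$-coupling produces a diagonal consumption at most $\frac{\lambda_i}{4}\ell_i^2|\tilde\eta_i|^2$ plus cross contributions of the form $\frac{c\,|P_j|^2\iota_{ji}^2}{\lambda_j}\ell_j^{2(r_j-r_i+1)}|\tilde\eta_i|^2$ for $j>i$; the $J(\ell)$-term contributes $\frac{\lambda_i}{8}\ell_i^2|\tilde\eta_i|^2+\varrho_0|\tilde\eta_i|^2$; and the $\Delta_1$-term contributes $\frac{\lambda_i}{8}\ell_i^2|\tilde\eta_i|^2$ plus a bounded residual at most $\varrho_1$, each $\epsilon$ in Young chosen to match these fractional allocations.

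Collecting everything yields exactly the structural inequality targeted by Lemma~\ref{lemma-5},
\[
\dot V \;\leq\; -\sum_{i=1}^m\!\Big[\tfrac{\lambda_i}{4}\ell_i^2 - \!\sum_{j>i}\tfrac{c\,|P_j|^2\iota_{ji}^2}{\lambda_j}\ell_j^{2(r_j-r_i+1)} - \varrho_0\Big]|\tilde\eta_i|^2 + \varrho_1 ,
\]
so with the $\ell_i$ from~(\ref{c-ell}) and any $\kappa\geq\theta^\ast$, Lemma~\ref{lemma-5} forces each bracket to be at least $\kappa\ell_i$, giving $\dot V \leq -\kappa\sum_i\ell_i|\tilde\eta_i|^2 + \varrho_1 \leq -\tfrac{\kappa}{\bar p}V + \varrho_1$, where $\bar p:=\max_i|P_i|$ gives $V\leq \bar p\sum_i\ell_i|\tilde\eta_i|^2$. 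Comparison then produces $V(t)\leq V(0)e^{-\kappa t/\bar p} + \bar p\varrho_1/\kappa$ throughout $[0,\tau_{\max})$, the hypothesis $x\in\Omega_{c+1}$ ensuring that Lemmas~\ref{lemma-3} and~\ref{lemma-Delta} remain applicable.

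To finish, I would use~(\ref{errors}) together with $|\eta(0)|\leq R$ and the compactness of $\Omega_{c+1}$ to conclude $|\tilde\eta(0)|\leq C_R\ell_1^{r_{\max}}$ for some $C_R$ independent of $\kappa$, hence $V(0)\leq c_V\ell_1^{2r_{\max}+1}$, polynomial in $\kappa$. Setting $\underline p := \min_i \lambda_{\min}(P_i)$, the target $|\tilde\eta(t)|\leq 2\epsilon$ follows from $V(t)\leq 4\epsilon^2\underline p\, g_m\kappa$, which splits into the two tunable requirements $V(0)e^{-\kappa\tau_2/\bar p}\leq 2\epsilon^2\underline p\, g_m\kappa$ and $\bar p\varrho_1/\kappa\leq 2\epsilon^2\underline p\, g_m\kappa$. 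The second is merely $\kappa^2 \geq O(1)$, while the first reduces to $\kappa\tau_2 \geq O(\log\kappa)$. Both are satisfied for all $\kappa\geq\kappa^\ast$ once $\kappa^\ast\geq\theta^\ast$ is chosen large enough. The main obstacle is exactly this classical high-gain \emph{peaking} mismatch: $V(0)$ grows polynomially in $\kappa$ whereas the contraction rate of $V$ is only $\Theta(\kappa)$; nevertheless $\log V(0)=O(\log\kappa)$ is dominated by $\kappa\tau_2$ for any fixed $\tau_2>0$ once $\kappa$ is sufficiently large, which is the reason the conclusion holds on $[\tau_2,\tau_{\max})$ rather than on the full interval.
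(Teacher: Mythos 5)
Your proposal is correct and follows essentially the same route as the paper: the same weighted Lyapunov function $V=\tilde\eta^{\top}L_{\ell}P\tilde\eta$, the same application of Lemma~\ref{lemma-3} to the scaled argument $L_\ell\tilde\eta$, the same Young-inequality allocation leading to the bracketed coefficients handled by Lemma~\ref{lemma-5}, and the same comparison-lemma-plus-peaking argument in which the polynomial-in-$\kappa$ growth of the initial condition is beaten by the $e^{-\kappa\tau_2/\bar p}$ decay on $[\tau_2,\tau_{\max})$. No substantive differences.
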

\begin{proof}
Let $V_c(\tilde\eta)=\tilde\eta^{\top}L_{\ell}P\tilde\eta$, and $\alpha_1=\min\{\mbox{eig}(P)\}$ and $\alpha_2=\max\{\mbox{eig}(P)\}$ with $\mbox{eig}(P)$ denoting the set of all eigenvalues of matrix $P$. It is clear that
\beeq{\ba{l}\label{ineq-V-c-0}
V_c(\tilde\eta)\geq \alpha_1\sum_{i=1}^m\ell_{i}|\tilde\eta_i|^2 \geq  \alpha_1\ell_{\min}|\tilde\eta|^2\,\\
V_c(\tilde\eta)\leq \alpha_2\sum_{i=1}^m\ell_{i}|\tilde\eta_i|^2 \leq  \alpha_2\ell_{\max}|\tilde\eta|^2\,
\ea}
where $\ell_{\max},\ell_{\min}$ denote the maximum and minimum of $\ell_1,\ldots,\ell_m$, respectively.

We compute the derivative of $V_c$ along system (\ref{tilde-eta}) as
\[\ba{l}
\dot V_c
= \dst\sum_{i=1}^m \ell_i\tilde\eta_i^{\top} (P_i F_i+F_i^{\top}P_i) \ell_i\tilde\eta_i + 2\tilde\eta^{\top}L_{\ell}P\mathbf{G}\Delta_0\mathbf{H}L_{\ell}\tilde\eta \\ \qquad + \dst 2\sum_{i=1}^{m-1}\sum_{j=i+1}^{m}\ell_j\tilde\eta_j^{\top}P_jL_{ji}(\ell_j,y)B_{2r_i}^{\top}\tilde\eta_i \,\\
\qquad + 2\tilde\eta^{\top}L_{\ell}P\mathbf{G}[\Delta_0KJ(\ell)\tilde\eta  + \Delta_1]\,\\
\leq -\dst\sum_{i=1}^m\lambda_i\ell_i^2|\tilde\eta_i|^2 + \dst 2\sum_{i=1}^{m-1}\sum_{j=i+1}^{m}\iota_{ij}|P_j|\ell_i^{r_i-r_j+2}|\tilde\eta_i|\cdot|\tilde\eta_j|\\
 \qquad + 2\tilde\eta^{\top}L_{\ell}P\mathbf{G}[\Delta_0KJ(\ell)\tilde\eta  + \Delta_1]
\ea\]
where the inequality is obtained by  Lemma \ref{lemma-3} and (\ref{eq:Lij}).
Then by Young's Inequality {\cite[pp. 466]{Khalil(book2002)}}, (\ref{ineq-J}) and Lemma \ref{lemma-Delta}, we have
\[\ba{l}
\dst2\iota_{ij}|P_j|\ell_i^{r_i-r_j+2}|\tilde\eta_i|\cdot|\tilde\eta_j| \leq \\ \qquad \dst\frac{2(j-1)\iota_{ji}^2|P_j|^2}{\lambda_j}\ell_j^{2(r_j-r_i+1)}|\tilde\eta_i|^2+\frac{\lambda_j\ell_j^2}{2(j-1)}|\tilde\eta_j|^2\,,\\
\dst2\tilde\eta^{\top}L_{\ell} P\mathbf{G}\Delta_0KJ(\ell)\tilde\eta \leq \\ \qquad \dst\frac{\lambda_{\min}}{8}\sum_{k=1}^m\ell_k^2|\tilde\eta_k|^2 + \dst\frac{8}{\lambda_{\min}}|P|^2\mu_0^2|K|^2\delta_2^2\sum_{k=1}^m|\tilde\eta_k|^2\,,\\
\dst2\tilde\eta^{\top}L_{\ell}\tilde\eta P\mathbf{G}\Delta_1 \leq \dst\frac{\lambda_{\min}}{8}\sum_{k=1}^m\ell_k^2|\tilde\eta_k|^2 + \dst\frac{8}{\lambda_{\min}}|P|^2(\delta_1)^2\,.
\ea\]
The first of the above inequalities further indicates that
\[\ba{l}
\dst2\sum_{i=1}^{m-1}\sum_{j=i+1}^{m}\iota_{ij}|P_j|\ell_i^{r_i-r_j+2}|\tilde\eta_i|\cdot|\tilde\eta_j| \leq \frac{\lambda_m}{2}\ell_m^2|\tilde\eta_m|^2+ \\ \dst\sum_{i=1}^{m-1}\big(\frac{\lambda_i}{2}\ell_i^2+\dst\sum_{j=i+1}^m \frac{2(j-1)\iota_{ji}^2|P_j|^2}{\lambda_j}\ell_j^{2(r_j-r_i+1)}\big)|\tilde\eta_i|^2\,.
\ea\]
Therefore, we  have
\[\ba{l}
\dot V_c \leq  -(\dst\frac{\lambda_m}{2}\ell_m^2-\varrho_0)|\tilde\eta_m|^2+ \varrho_1\,\\
-\dst\sum_{i=1}^{m-1}\big(\frac{\lambda_i}{4}\ell_i^2-\dst\sum_{j=i+1}^m \frac{2(j-1)\iota_{ji}^2|P_j|^2}{\lambda_j}\ell_j^{2(r_j-r_i+1)}-\varrho_0\big)|\tilde\eta_i|^2.
\ea\]


By Lemma \ref{lemma-5}, it is seen that for $\kappa \geq\theta^\ast$, the derivative of $V_c$  can be further bounded by
\vspace{-0.5em}
\[
\dot V_c \leq -\kappa \sum_{i=1}^{m}\ell_i|\tilde\eta_i|^2 + \varrho_1\,
\leq-({\kappa }/{\alpha_2}) V_c(\tilde\eta) + \varrho_1\,.
\vspace{-0.5em}
\]

\noindent
\vspace{-0.5em}
Recalling inequalities (\ref{ineq-V-c-0}), standard arguments then show that
\[\ba{l}
\qquad V_c(\tilde\eta(t)) \leq e^{-\frac{\kappa }{\alpha_2}t}V_c(\tilde\eta(0)) + {\varrho_1\alpha_2}/{\kappa }\,\Longrightarrow  \\
|\tilde\eta(t)| \leq {\alpha_2\ell_{\max}}/{(\alpha_1\ell_{\min})}e^{-\frac{\kappa }{\alpha_2}t}|\tilde\eta(0)| + {\varrho_1\alpha_2}/({\alpha_1\kappa \ell_{\min}})\,.
\ea\]

With (\ref{c-ell}), and recalling the fact that all coefficients $g_i$'s in (\ref{c-ell}) are independent of parameter $\kappa$, it is immediate that there exists a $\tilde\kappa^\ast>0$ such that for all $\kappa\geq\max\{\tilde\kappa^\ast,\theta^\ast\}$, $\ell_{\min}=\ell_m$ and $\ell_{\max}=\ell_1$, and there exists a constant $\varsigma_1>0$, independent of $\kappa $ such that
$
({\alpha_2\ell_{\max}})/({\alpha_1\ell_{\min}}) \leq \varsigma_1 \kappa ^{\varpi_1}\,
$
where $\varpi_1=\dst\Pi_{k=1}^{m-1}(r_{k+1}-r_{k}+1)-1$.
On the other hand, since $|\eta(0)|\leq R$ and $x\in\Omega_{c+1}$, it can be seen from (\ref{tilde-eta}) that
$
|\tilde\eta(0)| \leq \varsigma_2\kappa ^{\varpi_2}
$
for some $\varsigma_2>0$ and $\varpi_2>0$.
Thus, we have
\beeq{
|\tilde\eta(t)| \leq \varsigma_1\varsigma_2\kappa ^{\varpi_1+\varpi_2}e^{-\frac{\kappa }{\alpha_2}t} + {\varrho_1\alpha_2}/({g_m\alpha_1\kappa^2})\,.
}
Fix any $\epsilon>0$. We know that for any $\tau_2\in(0,\tau_{\max})$ there always exists a $\bar\kappa^\ast>0$ such that for all $\kappa\geq\bar\kappa^\ast$ and $t\in[\tau_2,\tau_{\max})$,
$
\varsigma_1\varsigma_2\kappa ^{\varpi_1+\varpi_2}e^{-\frac{\kappa }{\alpha_2}t} \leq \epsilon\,.
$
Thus letting $\kappa^\ast=\max\{1,\theta^\ast,\tilde\kappa^\ast,\bar\kappa^\ast,\sqrt{{\varrho_1\alpha_2}/({g_m\alpha_1\epsilon})}\}$  completes the proof. $\square$

\end{proof}


\vspace{-1.5em}
\subsection{Stability Analysis of the Closed-Loop System}

In this subsection, we analyze the asymptotic stability of the resulting closed-loop system using the nonlinear separation principles \cite{Teel(1995),Isidori(1999)}.

\begin{theorem}
Consider the closed-loop system consisting of the plant (\ref{partialform}), the observers (\ref{ELPHGO-1})-(\ref{ELPHGO-2}),  and the controller {(\ref{eq:estimates})-}(\ref{actual-u}). Suppose Assumptions \ref{ass-MPP}-\ref{ass-b} are satisfied. Given any compact set $\mathcal{C}\subset\mathbb{R}^{n+2\mathbf{r}}$, there exist $\ell_i>1$, $i=1,\ldots,m$,  such that all trajectories of the closed-loop system with initial conditions $(x(0),\eta(0))\in\mathcal{C}$ remain bounded and  $\dst\lim_{t\rightarrow\infty}|x(t)|=0$.
\end{theorem}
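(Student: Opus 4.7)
The plan is to invoke the nonlinear separation principle of \cite{Teel(1995),Isidori(1999)}, treating the observer (\ref{ELPHGO-1})-(\ref{ELPHGO-2}) as a fast subsystem whose re-scaled estimation error $\tilde\eta$ is driven to a small ball by Proposition \ref{lemma-4}, while the plant state $x$ evolves under a feedback that is close to the ideal one $u^\ast$ of (\ref{ideal-u})-(\ref{v}). The role of the saturation in (\ref{actual-u}) is purely to prevent finite escape during the transient in which $\tilde\eta$ may be large; once $\tilde\eta$ is small, the saturation becomes inactive on the trajectory and the analysis reduces to studying a small perturbation of the ideal closed loop, for which (\ref{V_x}) already furnishes a global Lyapunov function.

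The first step is to fix $c>0$ so that the projection $\mathcal{C}_x$ of the compact initial set $\mathcal{C}$ onto the $x$-coordinates satisfies $\mathcal{C}_x\subset\Omega_c$, pick $l$ by (\ref{eq:l}), and let $R$ bound $|\eta(0)|$ on $\mathcal{C}$. Boundedness of $\mbox{satv}_l$ together with Lemma \ref{lemma-psiu} imply that $u=\pi(x,\tilde\eta)$ is uniformly bounded whenever $x\in\Omega_{c+1}$, and hence so is $\dot x$ in (\ref{xi-3}). This yields a uniform lower bound $\tau_{\max}>0$ on the exit time of $x$ from $\Omega_{c+1}$, allowing an application of Proposition \ref{lemma-4} on $[0,\tau_{\max})$: for any $\tau_2<\tau_{\max}$ and any target $\epsilon>0$, choosing $\kappa\geq\kappa^\ast$ gives $|\tilde\eta(t)|\leq 2\epsilon$ for $t\in[\tau_2,\tau_{\max})$.

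The key step, and the main technical hurdle, is to convert the smallness of $\tilde\eta$ on $[\tau_2,\tau_{\max})$ into a decrease of $V_x$ that prevents $x$ from leaving $\Omega_{c+1}$. Because (\ref{u-error}) depends on $\tilde\eta$ only through $\tilde\sigma$ and $K(\Lambda_\ell^{-1}\otimes C_2)\tilde\eta$ --- the latter being shrunk by the high gains --- one shows, using the implicit function theorem applied to $\psi(u)=0$ in Lemma \ref{lemma-psiu}, that $|u-u^\ast|\leq L|\tilde\eta|$ uniformly on $\Omega_{c+1}$ for a constant $L$ independent of $\kappa$ (once $\kappa$ is above a threshold). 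Feeding this estimate into (\ref{V_x}) produces the ISS-like bound $\dot V_x\leq -\alpha_x(|x|)+\rho(|\tilde\eta|)$ with $\rho\in\mathcal{K}$, so on the compact shell $\{c\leq V_x(x)\leq c+1\}$ one has $\dot V_x<0$ provided $\epsilon$ is sufficiently small. Choosing first $\tau_2$ small enough that the uniform bound on $\dot V_x$ during $[0,\tau_2]$ cannot push $V_x(x)$ above $c+\tfrac12$, and then $\kappa$ large enough to realize this $\tau_2$ and the corresponding $\epsilon$, a standard contradiction with the definition of $\tau_{\max}$ forces $\tau_{\max}=\infty$. This simultaneously gives global-in-time boundedness of $x$ in $\Omega_{c+1}$ and, via (\ref{errors}) and boundedness of $\sigma$ on $\Omega_{c+1}$, boundedness of $\eta$.

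Finally, asymptotic convergence $x(t)\to 0$ follows by combining two facts: (i) the bound $\dot V_x\leq-\alpha_x(|x|)+\rho(|\tilde\eta|)$ together with the ultimate smallness $\limsup_{t\to\infty}|\tilde\eta(t)|\leq 2\epsilon$ from Proposition \ref{lemma-4} drives $x$ into an arbitrarily small neighborhood of the origin (since $\epsilon$ was arbitrary subject to the $\kappa$ calibration performed once at the outset, one argues by a nested invariance/ISS-cascade argument, exploiting that inside $\Omega_c$ the saturation is inactive and the ideal dynamics are globally asymptotically stable); (ii) the local exponential stability guaranteed by Assumption \ref{ass-MPP} and the linearity of the residual feedback (\ref{v}) make the origin of the ideal closed loop locally exponentially stable, so the ISS-small-gain interconnection with $\tilde\eta$-dynamics that converge to zero in the limit $\kappa\to\infty$ produces $x(t)\to 0$. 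The principal obstacle throughout is ensuring that the $\kappa$-independent Lipschitz constant $L$ in $|u-u^\ast|\leq L|\tilde\eta|$ survives the implicit definition of $\pi$; this is precisely where the structural choice in (\ref{errors}) to estimate the \emph{entire} $\sigma_k$ (discussed in Remark \ref{remark-3}) pays off, since it removes the $\kappa$-dependent residual $\mathcal{J}_k$ that would otherwise preclude the bound.
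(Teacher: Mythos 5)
Your proposal is correct and follows essentially the same separation-principle route as the paper's proof: a uniform bound on $\dot V_x$ over $\Omega_{c+1}$ gives the transient escape-time estimate, Proposition \ref{lemma-4} makes $\tilde\eta$ small after $\tau_2$, an annular-set argument traps $x$ in a sublevel set for all time, and the local exponential part of Assumption \ref{ass-MPP} upgrades practical to asymptotic convergence. The only cosmetic difference is that you extract the $O(|\tilde\eta|)$ perturbation bound via an implicit-function-theorem Lipschitz estimate $|u-u^\ast|\leq L|\tilde\eta|$, whereas the paper applies the Mean Value Theorem to $\pi(x,\tilde\eta)$ to show the saturation becomes inactive and the $\xi$-dynamics reduce to the ideal ones plus a disturbance bounded by $2\rho\epsilon$; the two devices are interchangeable here.
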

\begin{proof}
Observe that (\ref{partialform}) can be rewritten as
\beeq{
\ba{l}
\dot x_0 = f_0(x_0,\xi,u)\,\\
\dot \xi = [\mathbf{A} - \mathbf{B}(y)K]\xi + \mathbf{B}(y)[K\xi+ a(x)+b(x)u]\\
\ea
}
which, together with (\ref{V_x}), yields
\[\ba{l}
\dst\dot V_x(x) \leq - \alpha_x(|x|) + \dst\frac{\partial V_x}{\partial \xi}\mathbf{B}(y)[K\xi+ a(x)+b(x)u]\\
\dst\leq \left|\frac{\partial V_x}{\partial \xi}\mathbf{B}(y)\right| \left[|K\xi|+|a(x)|+ (l+\epsilon_0)|b(x)||\widehat {\mathsf{B}}^{-1}|\right]\,.
\ea\]
It is clear that there exists a $\delta_0>0$, independent of $\ell_k$, such that
$\dot V_x(x) \leq \delta_0$
holds for all $x\in \Omega_{c+1}$.

Therefore, it can be concluded that given any initial condition $x(0)\in \mathcal{C}_x\subset\Omega_{c}$, there exists $\tau_1\geq{1\over\delta_0}$ such that $x(t)\in\Omega_{c+1}$ for all $t\in[0,\tau_1]$.

Now, let us consider the closed-loop system (\ref{xi-3})-(\ref{tilde-eta}) as
\beeq{\label{clc-loop}
\ba{l}
\dot x_0 = f_0(x_0,\xi,u)\,\\
\dot \xi = (\mathbf{A} - \mathbf{B}(y)K)\xi + \mathbf{B}(y)[\phi(x,\tilde\eta)\\ \qquad -\mbox{satv}_l\left(\phi(x,\tilde\eta) - \tilde\sigma - K(\Lambda_{\ell}^{-1}\otimes C_2)\tilde\eta\right)]\\
\dot{\tilde\eta} = [\mathbf{F}+\mathbf{G}\Delta_0\mathbf{H}+\mathbf{G}\Delta_0KJ(\ell)L_{\ell}^{-1}]L_{\ell}\tilde\eta  + \mathbf{G} \Delta_1\,.
\ea}

{
According to the Mean Value Theorem, we know from (\ref{u-error}) that there exists $u^\prime\in\mathbb{R}^m$ such that
\[\ba{l}
\widehat {\mathsf{B}}\pi(x,\tilde\eta)=\,\\ - \nabla\mbox{satv}_{l}(u^\prime)(\sigma(x,\pi(x,\tilde\eta)) + K\xi - \tilde\sigma - K(\Lambda_{\ell}^{-1}\otimes C_2)\tilde\eta)\,,
\ea\]
which, for $x\in\Omega_{c+1}$, yields
\[\ba{l}
\pi(x,\tilde\eta) = -\widehat {\mathsf{B}}^{-1}[I+\nabla\mbox{satv}_{l}(u^\prime)\Delta_b(x)]^{-1}\nabla\mbox{satv}_{l}(u^\prime)\,\\ \qquad \cdot[a(x)+K\xi - \tilde\sigma - K(\Lambda_{\ell}^{-1}\otimes C_2)\tilde\eta]\,.
\ea\]
Recalling the definition of $\phi(x,\tilde\eta)$ in (\ref{eq-phi}), we thus have
\[\ba{l}
\phi(x,\tilde\eta)
=(I-\Delta_3)[K\xi+a(x)] + \Delta_3[\tilde\sigma +K(\Lambda_{\ell}^{-1}\otimes C_2)\tilde\eta]\,\\
\ea\]
with $\Delta_3: = \Delta_b(x)[I+\nabla\mbox{satv}_{l}(u^\prime)\Delta_b(x)]^{-1}\nabla\mbox{satv}_{l}(u^\prime)$. For all $x\in\Omega_{c+1}$ , by $|\nabla\mbox{satv}_{l}(u^\prime)|\leq 1$ and (\ref{bproperty}), we have
\[
|\Delta_3|
\leq |\Delta_b(x)|\cdot\left|[I+\nabla\mbox{satv}_{l}(u^\prime)\Delta_b(x)]^{-1}\right| \leq {\mu_0}/({1-\mu_0})\,,
\]
which yields
\[\ba{l}
|\phi(x,\tilde\eta) - \tilde\sigma - K(\Lambda_{\ell}^{-1}\otimes C_2)\tilde\eta|\,\\
\leq \frac{1}{1-\mu_0}|K\xi+a(x)| + \frac{1}{1-\mu_0}|\tilde\sigma +K(\Lambda_{\ell}^{-1}\otimes C_2)\tilde\eta|\,\\
\leq l -1 + \rho|\tilde\eta|
\ea\]
where the last inequality is obtained by setting $\rho:=({|K|+1})/({1-\mu_0})$, using  the definition of the saturation level $l$ in (\ref{eq:l}) and $\ell_i\geq 1$, $i=1,\ldots,m$.
}

In view of the above analysis, given any $\tau_2<\tau_1$, according to Proposition \ref{lemma-4},  for any sufficiently small $\epsilon>0$ there exists a sufficiently large $\kappa$ such that $\rho|\tilde\eta(t)|\leq 2\rho\epsilon<1$  for all $t\in(\tau_2,\tau_1]$.
This implies that
$$\ba{l}
\mbox{satv}_l(\phi(x,\tilde\eta) - \tilde\sigma - K(\Lambda_{\ell}^{-1}\otimes C_2)\tilde\eta)  \\ \qquad \qquad =\phi(x,\tilde\eta) - \tilde\sigma - K(\Lambda_{\ell}^{-1}\otimes C_2)\tilde\eta
\ea$$
for $t\in(\tau_2,\tau_1]$. Thus, the $\xi$-subsystem in (\ref{clc-loop}) reduces to
\[
\dot \xi = (\mathbf{A} - \mathbf{B}(y)K)\xi + \mathbf{B}(y) [\tilde\sigma + K(\Lambda_{\ell}^{-1}\otimes C_2)\tilde\eta]\,
\]
where $|\tilde\sigma + K(\Lambda_{\ell}^{-1}\otimes C_2)\tilde\eta| \leq 2\rho \epsilon$ for $t\in(\tau_2,\tau_1]$.

Towards this end, pick any number $0<c'\ll c$ and consider the ``annular" compact set
$
S_{c'}^{c+1} = \{x: c'\leq V_x(x)\leq c+1\}\,.
$
Let $\nu_{\min}$ be $\nu_{\min} = \min_{x\in S_{c'}^{c+1}}\alpha_x(|x|)$, and $\epsilon$ be such that $2\rho \epsilon \sup\limits_{x\in\Omega_{c+1}}|\frac{\partial V_x}{\partial \xi}\mathbf{B}(y)|\leq {1\over 2}\nu_{\min}$. It then follows that $
\dot V_x(x) \leq -{1\over 2}\nu_{\min}
$
so long as $x\in S_{c'}^{c+1}$. This, in turn, implies
\[\ba{l}
V_x(x(t)) \leq V_x(x(\tau_1)) - {\nu_{\min}\over 2}(t-\tau_1) \leq c+1- {\nu_{\min}\over 2}(t-\tau_1)
\ea\]
so long as $x\in S_{c'}^{c+1}$. Clearly, there exists a time $\tau_3>\tau_1$ such that $x(t)\in\Omega_{c+1}$ for all $t\in[\tau_1,\tau_3]$ and $V_x(x(\tau_3))=c'$. Since $\dot V_x$ is negative on the boundary of $\Omega_{c'}$, it is concluded that $x(t)\in\Omega_{c'}$ for all $t\geq\tau_3$ and $x(t)\in\Omega_{c+1}$ for all $t\geq0$.

Therefore, according to the standard arguments in \cite{Teel(1995),Isidori(1999)}, and by the local exponential part of Assumption \ref{ass-MPP}, we can conclude that the origin of (\ref{clc-loop}) is asymptotically stable for all  $(x(0),\eta(0))\in\mathcal{C}$. This thus completes the proof. $\square$

\end{proof}

{
\begin{remark}
  As in \cite{Teel(1995),Isidori(1999)}, if Assumption \ref{ass-MPP} is relaxed by removing the locally exponential element, then the resulting closed-loop system would be semiglobally practically stable at the origin. We also remark that the observers (\ref{ELPHGO-1})-(\ref{ELPHGO-2}) belong to the class of extended-state observers \cite{Freidovich&Khalil(TAC2008),WangSCL,Wu&Isidori(2019),Guo&Zhao(2013)}, providing extra estimates of perturbation terms. In this respect, our approach is also robust to bounded smooth perturbations in functions $f_0,a_k,b_k$ in (\ref{partialform}) as in \cite{Freidovich&Khalil(TAC2008)}, leading to semiglobally practical stability at the origin. $\square$
\end{remark}
}

\section{An Illustrative Example}
Consider the 2-input 2-output system of the form
\beeq{\label{eq:exp-sys}
\ba{l}
\dot x_{01} = -x_{01}+x_{02}\xi_{12}u_2 + \xi_{12}\,\\
\dot x_{02} = -x_{02}-x_{01}\xi_{12}u_2 + \xi_{11}\,\\
\left\{
\ba{l}
\dot \xi_{1} = A_2\xi_1 + B_2[a_1(x) + b_1(x)u]\\
y_1 = \xi_{11}
\ea
 \right.\\
\left\{\ba{l}
\dot \xi_{2} = A_3\xi_{2} + M_2^1(y)[a_1(x) + b_1(x)u]\\ \qquad  + B_3[ a_2(x) + b_2(x)u]\,\\
y_2 = \xi_{21}
\ea\right.
\ea}
where the full state $x=\col(x_0,\xi_1,\xi_2)\in\mathbb{R}^7$ with $x_0=\col(x_{01},x_{02})$, $\xi_{1}=\col(\xi_{11},\xi_{12})$, $\xi_{2}=\col(\xi_{21},\xi_{22},\xi_{23})$, and output $y=\col(y_1,y_2)\in\mathbb{R}^2$ and $u=\col(u_1,u_2)\in\mathbb{R}^2$, $a(x)=\qmx{x_{01}\xi_{21} \cr x_{02}}$, $b(x)=\qmx{1 & \frac{\sin\xi_{12}}{3} \cr 0&1}$ and the multiplier vector $M_2^1(y)=\col(0,\delta_{23}^1(y),0)$ with $\delta_{23}^1(y)=\cos y_1$.  It is clear that Assumption 1 is satisfied with an ISS Lyapunov function $V_0(x_0)=\frac{1}{2}|x_{0}|^2$, whose derivative is given by
\[
\dot V_0 = -|x_{01}|^2-|x_{02}|^2 + x_{01}\xi_{12} + x_{02}\xi_{11} \leq -\frac{1}{2}|x_0|^2 + \frac{1}{2}|\xi_{1}|^2\,.
\]
Besides, it can be easily verified that Assumption 2 is satisfied, i.e., $|\delta_{22}^1(y)| \leq 1$ for all $y\in\mathbb{R}^2$, and Assumption 3 is also satisfied with $\widehat{{\mathsf{B}}} = I_2$ and $\mu_0=\frac{1}{3}$ for $x\in\mathbb{R}^7$, i.e., $|(b(x)-\widehat{{\mathsf{B}}})\widehat{{\mathsf{B}}}^{-1}|\leq \mu_0$.

\begin{remark}
  We observe that if the multiplier $\delta_{23}^1$ is zero, then the system (\ref{eq:exp-sys}) takes a standard MIMO normal form with the strongly minimum-phase property, for which the approaches in \cite{Freidovich&Khalil(TAC2008),WangSCL,Guo&Zhao(2013)} can be applied. If $\delta_{23}^1$ is a nonzero constant, then the system (\ref{eq:exp-sys}) turns out linearizable by static state feedback. For this class of systems, the approach in \cite{WangTAC} can be adopted by firstly designing ``virtual" outputs $\bar y_1 =  c_{11} \xi_{11} + \xi_{12}$ and $\bar y_2 = c_{21}\xi_{21} + c_{22}(\xi_{22}-\delta_{23}^1\xi_{12}) + \xi_{23}$ so as to render a strongly minimum-phase MIMO system in the normal form, for which the conventional output feedback stabilization approaches can be adopted. We note that both cases are indeed special cases of our considered systems, while the approaches in \cite{Freidovich&Khalil(TAC2008),WangSCL,Guo&Zhao(2013),WangTAC} cannot be used to handle our problem since the strongly minimum-phase property is not preserved after redesigning the ``virtual" outputs for normal form due to the presence of the output-dependent multiplier $\delta_{23}^1(y)$. If $\delta_{23}^1(\cdot)$ is also dependent on other states, e.g., $\delta_{23}^1(y,\xi_{12},\xi_{22})$, and there is no dynamics of the internal state $x_0$, i.e., with a trivial zero dynamics, the system (\ref{eq:exp-sys}) belongs to the class of systems considered in \cite{Wang&Isidori(TAC2017),Wu&Isidori(2019)}, where the system is linearizable via dynamical state feedback. However, we note that when the zero dynamics is nontrivial as the $x_0$ subsystem in (\ref{eq:exp-sys}), how to apply the solutions of \cite{Wang&Isidori(TAC2017),Wu&Isidori(2019)} is not clear. Moreover, \cite{Wu&Isidori(2019)} requires the high-frequency gain matrix $b(x)$ to be lower-triangular, while our $b(x)$ in (\ref{eq:exp-sys}) does not satisfy it. In view of the above analysis, the existing approaches cannot be used to handle the output feedback stabilization of the system (\ref{eq:exp-sys}). $\square$
\end{remark}

In this setting, the ideal feedback control can be designed as
\beeq{\label{eq:u^ast}\ba{l}
u^\ast = -b(x)^{-1}[a_1(x) + K \xi]\,\\
 = -\qmx{1 & -\frac{\sin\xi_{12}}{3} \cr 0&1}\qmx{x_{01}\xi_{21} +K_1\xi_1 \cr x_{02}+K_2\xi_2}
\ea}
with $K_1=\qmx{\frac{1}{4} & 1}$ and $K_2=\qmx{\frac{1}{8} & \frac{3}{4}&\frac{3}{2}}$.
This in turn yields a $\xi$-dynamics of the form
\beeq{\label{eq:exp-xi}
\ba{l}
\left\{
\ba{l}
\dot \xi_{11} = \xi_{12}\\
\dot \xi_{12} = -K_1\xi_1\\
\ea
 \right.\\
\left\{\ba{l}
\dot \xi_{21} = \xi_{22}\\
\dot \xi_{22} = \xi_{23} - \delta_{23}^1(y)K_1\xi_1\\
\dot \xi_{23} = -K_2\xi_2\,\\
\ea\right.
\ea}
which is clearly globally exponentially stable at the origin with a Lyapunov function
$V_{\xi}(\xi) =  \xi_1^\top P_{\xi_1}\xi_1 + \xi_2^\top P_{\xi_2}\xi_2$ with
\[\ba{l}
P_{\xi_1} = 100\qmx{2.625 &2 \cr 2 & 2.5}\,,\\
P_{\xi_2} = \qmx{4.2266 & 6.8594 & 4.0000 \cr 6.8594&15.8750&9.8125 \cr 4.0000&9.8125&6.8750}\,.
\ea\]
Computing the time derivative of $V_{\xi}(\xi)$ yields
\[
\dot V_{\xi}  \leq  -100|\xi_1|^2 - |\xi_2|^2 + 16|\xi_1||\xi_2| \leq -2|\xi_1|^2 - 0.3|\xi_2|^2\,.
\]
Towards this end, the ideal feedback control law (\ref{eq:u^ast}) globally exponentially stabilizes system (\ref{eq:exp-sys}) with a Lyapunov function $V_x(x)=V_0(x_0) + V_{\xi}(\xi)$, satisfying
\[
\dot V_x(x) \leq -\frac{1}{2}|x_{0}|^2-|\xi_1|^2 - 0.3|\xi_2|^2\,.
\]

With the above ideal control law (\ref{eq:u^ast}) and the Lyapunov function $V_x(x)$, we take $c=2$ and $\Omega_c=\{x\in\mathbb{R}^7: V_x(x)\leq c\}$. Then, it can be verified that $ \sup_{x\in\Omega_{c+1}}\frac{1}{1-\mu_0}\left|a(x)+K\xi\right|\leq 24$, and we thus design the saturation level $l=25$. The desired output-feedback control law is given by
\beeq{\label{actual-u}
u = -\widehat B^{-1}\mbox{satv}_{l}\left(\hat\sigma + K\hat\xi\right)
}
where $\hat\sigma= \col(\eta_{12}^2,\eta_{23}^2)$ and $\hat\xi=\col(\eta_{11}^1,\eta_{12}^1,\eta_{21}^1,\eta_{22}^1,\eta_{23}^1)$ are provided by the following extended low-power high-gain observer
\beeq{\label{ELPHGO-exp}\ba{l}
\left\{
\ba{l}
\dot \eta_{11}^1 = \eta_{11}^2 +  \ell_1\gamma_{11}^1(y_1-\eta_{11}^1)\,\\
\dot \eta_{11}^2 = \eta_{12}^2 + \widehat B_1 u + (\ell_1)^2\gamma_{11}^2(y_1-\eta_{11}^1)\,
\ea
\right.\\
\left\{
\ba{l}
\dot \eta_{12}^1 = \eta_{12}^2 + \widehat B_1 u + \ell_1\gamma_{12}^1(\eta_{11}^2-\eta_{12}^1)\,\\
\dot \eta_{12}^2 =    (\ell_1)^2\gamma_{12}^2(\eta_{11}^2-\eta_{12}^1)\,\\
\ea
\right.\\
\left\{
\ba{l}
\dot \eta_{21}^1 = \eta_{21}^2 +  \ell_2\gamma_{21}^1(y_2-\eta_{21}^2)\,\\
\dot \eta_{21}^2 = \eta_{22}^2 + \delta_{23}^1(\eta_{12}^2+ \widehat B_2 u)+  (\ell_2)^2\gamma_{21}^2(y_2-\eta_{21}^2)\,
\ea
\right.\\
\left\{
\ba{l}
\dot \eta_{22}^1 = \eta_{22}^2 + \delta_{23}^1(\eta_{12}^2+ \widehat B_2 u) + \ell_2\gamma_{22}^1(\eta_{21}^2-\eta_{22}^1)\,,\\
\dot \eta_{22}^2 = \eta_{23}^2  + \widehat B_2 u  + (\ell_2)^2\gamma_{22}^2(\eta_{21}^2-\eta_{22}^1)\,
\ea
\right.\\
\bigg\{
\ba{l}
\dot \eta_{23}^1 = \eta_{23}^2 + \widehat B_2 u  + \ell_2\gamma_{23}^1(\eta_{22}^2-\eta_{23}^1)\,\\
\dot \eta_{23}^2 =  (\ell_2)^2\gamma_{23}^2(\eta_{22}^2-\eta_{23}^1)\,\\
\ea
\\
\ea}
in which $\qmx{\gamma_{i1}^1\cr \gamma_{i1}^2} = \qmx{2.5 \cr 4.6}$, $\qmx{\gamma_{i2}^1\cr \gamma_{i2}^2} = \qmx{2.5 \cr 1.533}$ for $i=1,2$, and $\qmx{\gamma_{23}^1\cr \gamma_{23}^2} = \qmx{2.5 \cr 0.511}$ such that
matrices
\[F_1 = \qmx{F_{11} & D_2  \cr
           \Gamma_{12}B_2^{\top} & F_{12}  \cr
            }\,,\quad
F_2 = \qmx{F_{21} & D_2 &  0  \cr
           \Gamma_{22}B_2^{\top} & F_{22} & D_2   \cr
            0 &  \Gamma_{23}B_2^{\top} & F_{23} \cr
            }
\]
are Hurwitz,
with  $F_{ki}=A_2-\Gamma_{ki}C_2$, and
\[\ba{l}
D_2 = \qmx{0 & 0 \cr 0 & 1}\,,\quad \Gamma_{ki}=\qmx{\gamma_{ki}^1 \cr \gamma_{ki}^2}\,, \quad i=1,\ldots,r_k\,,  k=1,2.
\ea\]

\begin{remark}
  The observers (\ref{ELPHGO-exp}) take advantage of the low-power high-gain technique proposed in \cite{Daniele&Lorenzo}. As seen from (\ref{ELPHGO-exp}), for each high-gain parameter $\ell_i$, $i=1,2$, it appears in (\ref{ELPHGO-exp}) with the power being either 1 or 2, independent on dimensions of partial states $\xi_1,\xi_2$.
  We note that in the standard extended high-gain observer \cite{Freidovich&Khalil(TAC2008),WangSCL}, the largest power of high-gain parameter is $r+1$ with $r$ being the dimension of states to be estimated. When such technique is adapted to handle the example (\ref{eq:exp-sys}), the observer takes the form
  \[\ba{l}
  \dot {\hat\xi}_1 = A_2\hat\xi_1 + B_2(\hat\sigma_1+ \widehat {\mathsf{B}}_1 u) + \diag(\ell_1,(\ell_1)^2)\bar\Gamma_1(y_1-C_2\hat\xi_1)\,\\
  \dot {\hat\sigma}_1 = (\ell_1)^3 \gamma_{13}(y_1-C_2\hat\xi_1)\,\\
  \dot {\hat\xi}_2 = A_3\hat\xi_2 + M_2^1(y)(\hat\sigma_1+ \widehat B_1 u) + B_3(\hat\sigma_2+ \widehat {\mathsf{B}}_2 u)\\ \qquad  + \diag(\ell_2,(\ell_2)^2,(\ell_2)^3)\bar\Gamma_2(y_2-C_3\hat\xi_2)\,\\
  \dot {\hat\sigma}_2 = (\ell_2)^4 \gamma_{24}(y_2-C_3\hat\xi_2)\,\\
  \ea
  \]
  with high-gain parameters $\ell_1,\ell_2$. It is clear that though the dimension of the observer reduces to $7$, the largest powers of high-gain parameters $\ell_1$ and $\ell_2$ increase to 3 and 4, respectively, which may cause numerical implementation problem when the gains $\ell_1,\ell_2$ are chosen very large. $\square$
\end{remark}

The simulations are finally performed  with high-gain parameters $\ell_1=5\times 10^5$ and $\ell_2=200$. The resulting evolutions of $|x(t)|$ are given in Figure \ref{fig1}, from which it can be seen that $|x(t)|$ asymptotically converges to zero and is lower than $5\times 10^{-5}$ at about $t=30s$. We are also interested in the robustness of the proposed output-feedback stabilizer. As a comparison, we add a sinusoidal perturbation $\sin t$ to the function $a_2(x)$, and the resulting evolutions of $|x(t)|$ are given in Figure \ref{fig2}, from which $|x(t)|$ is lower bounded by about $0.025$ at steady state.

\begin{figure}[thpb]
\begin{center}
\centering\includegraphics[height=60mm,width=75mm]{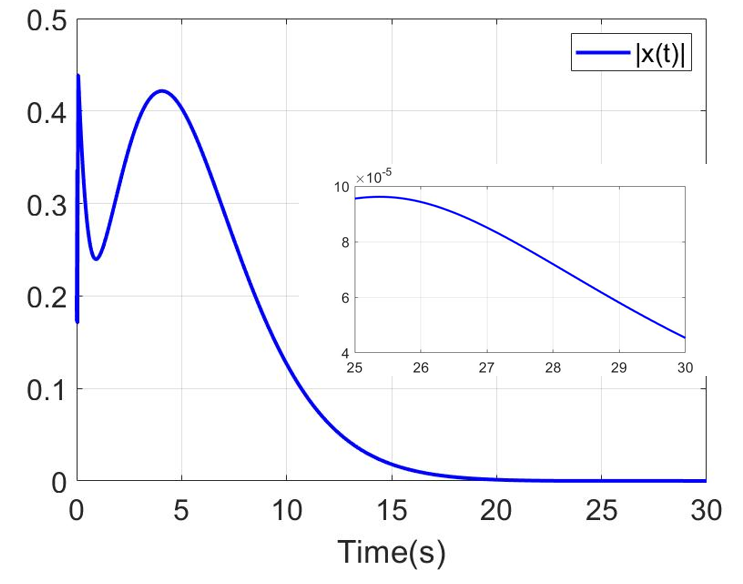} \caption{Evolution of $|x(t)|$.}
\label{fig1}
\end{center}
\end{figure}

\begin{figure}[thpb]
\begin{center}
\centering\includegraphics[height=60mm,width=75mm]{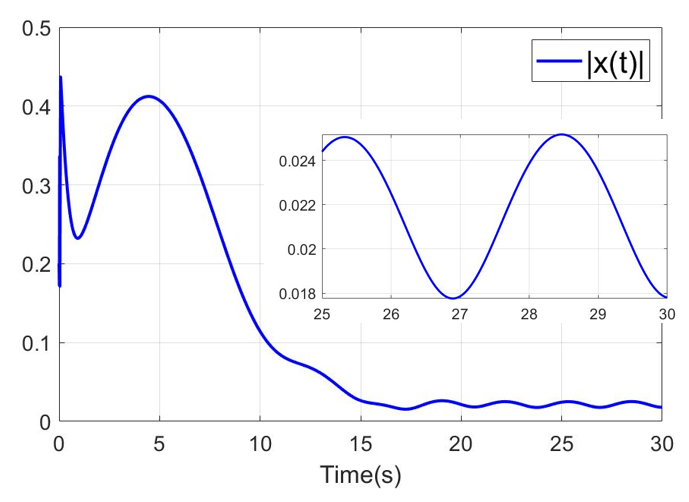} \caption{Evolution of $|x(t)|$ with perturbation.}
\label{fig2}
\end{center}
\end{figure}

\section{Conclusions}
This note studied the robust output feedback stabilization problem of multivariable invertible nonlinear systems (\ref{partialform}) with output-dependent multipliers. We first assumed that all states were accessible and proposed an ``ideal" state feedback law. By systematically designing a set of extended low-power high-gain observers, we showed that this ``ideal" law can be approximately estimated, providing a robust output feedback stabilizer such that the origin of the resulting closed-loop system is semiglobally asymptotically stable.
Moreover, each EHGO has the power of its high-gain parameter up to 2, which to some extent solves the numerical implementation problem.

\vspace{-0.5em}
\appendix

\subsection{Proof of Lemma \ref{lemma-3}}
\label{app-sec-3}

It is noted that $\gamma_{i,j}^1>0$ and  $\gamma_{i,j}^2>0$ are selected  such that matrix $F_i$ is Hurwitz \cite{Wang&Marconi(AUT2017)}. With these choices of $(\gamma_{i,j}^1,\gamma_{i,j}^2)$, we then consider the system
\beeq{\label{app-sys}\ba{rcl}
\dot {\tilde\eta}_i &=& F_i {\tilde\eta}_i + G_i u_i\,,\quad i=1,\ldots,m\\
u &=& \Delta_0y\,,\quad y_i = H_i {\tilde\eta}_i\,,\quad i=1,\ldots,m\\
\ea}
where state ${\tilde\eta}=\mbox({\tilde\eta}_1,\ldots,{\tilde\eta}_m)$ with ${\tilde\eta}_i=\col({\tilde\eta}_{i,1},\ldots,{\tilde\eta}_{i,r_i})$ and ${\tilde\eta}_{i,j}=\col({\tilde\eta}_{i,j}^1,{\tilde\eta}_{i,j}^2)$, output $y:=\col(y_1,\ldots,y_m)$ and input $u:=\col(u_1,\ldots,u_m)$.
By taking the change of variables
\[\ba{l}
\dst\chi_{i,k} = (\Pi_{j=1}^k\gamma_{i,r_i+1-j}^2)({\tilde\eta}_{i,r_i-k}^2-{\tilde\eta}_{i,r_i-k+1}^1)\,,\\ \qquad k=1,\ldots,r_i-1\\
\dst\chi_{i,r_i} = -(\Pi_{j=1}^{r_i}\gamma_{i,r_i+1-j}^2){\tilde\eta}_{i,1}^1\,\\
\dst\chi_{i,r_i+k} = -(\Pi_{j=1}^{r_i}\gamma_{i,j}^2){\tilde\eta}_{i,k}^2\,, k=1,\ldots,r_i \\
\ea\]
in which we have defined $y_i=\chi_{i,1}$,
system (\ref{app-sys}) is transformed into the lower-triangular form
\beeq{\label{app-sys-2}\ba{l}
\dot\chi_{i,k} = - \gamma_{i,r_i+1-k}^1\chi_k + \chi_{k+1} \,,\quad 1\leq k\leq r_i \\
\dot\chi_{i,r_i+k} = - (\Pi_{j=1}^k\gamma_{i,r_i+1-j}^2)\chi_{r_i+1-k} + \chi_{r_i+k+1}\,,\\ \qquad 1\leq k\leq r_i-1\\
\dot\chi_{i,2r_i} = - (\Pi_{j=1}^{r_i}\gamma_{i,j}^2)\chi_{1} -(\Pi_{j=1}^{r_i}\gamma_{i,j}^2)\,u_i \,\\
 u = \Delta_0y\,, \quad y_i = \chi_{i,1}\,\\
\ea}
Since $\gamma_{i,j}^1$ and  $\gamma_{i,j}^2$ are nonzero constants, the above change of variables defines a nonsingular matrix $T_i\in\mathbb{R}^{2r_i\times2r_i}$ such that $\chi_i=T_i \tilde\eta_i$ with $\chi_i=\col(\chi_{i,1},\ldots,\chi_{i,2r_i})$.

For compactness, system (\ref{app-sys-2}) can be rewritten as
\beeq{\label{app-sys-3}\ba{l}
\dot\chi_i = \bar F_i \chi_i + \bar G_{i} u_i\,,\quad
u = \Delta_0y\,,\quad y_i = \bar H_{i}\chi\,\quad \\
\ea}
in which $\bar F_i=T_i F_i  T_i^{-1}$, $\bar G_{i}=T_iG_i$ and $\bar H_{i}=H_i T_i^{-1}$.
From (\ref{app-sys-2}), it can be easily seen that the triplet $(\bar F_i,\bar G_i,\bar H_i)$ is controllable and observable.
Denote the minimal polynomial of Hurwitz $\bar F_i$ as
$
\mathcal{P}_i(s) = p_{i,0} + p_{i,1}s + \ldots + p_{i,2r_i}s^{2r_i-1} + s^{2r_i}\,.
$
By some straightforward but lengthy calculations, we can deduce that $p_{i,0}=\Pi_{j=1}^{r_i}\gamma_{i,j}^2$.
With this being the case, let $\mathcal{G}(s)$ denote the state transfer function of system (\ref{app-sys-3}), given by
$
\mathcal{G}(s)=\diag(\mathcal{G}_1(s),\ldots,\mathcal{G}_m(s))
$
where $\mathcal{G}_i(s) = {-(\Pi_{j=1}^{r_i}\gamma_{i,j}^2)}/{\mathcal{P}_i(s)}$, with $|\mathcal{G}_i({\infty})|=1<{1/\mu_0}$, $ i=1,\ldots,m
$.

By the Bounded Real Lemma \cite[Theorem 3.1]{Isidori(2017)}, there is a symmetric positive definite  matrix $\bar P_i$ and a  $\bar\lambda_i$ such that
\[
2\chi_i^{\top}\bar P_i(\bar F_i \chi_i + \bar G_{i} u_i)\leq -\bar\lambda_i |\chi_i|^2 + |u_i|^2/\mu_0^2-|y_i|^2
\]
for $i=1,\ldots,m$.
This then suggests
\[
\sum_{i=1}^m2\chi_i^{\top}\bar P_i(\bar F_i \chi_i + \bar G_{i} u_i)\leq -\sum_{i=1}^m\bar\lambda_i |\chi_i|^2 + |u|^2/\mu_0^2-|y|^2\,.
\]
Since $|u|=|\Delta_0 y|\leq |\Delta_0| |y|\leq \mu_0|y|$, we have
\[
\sum_{i=1}^m2\chi_i^{\top}\bar P_i(\bar F_i \chi_i + \bar G_{i} u_i)\leq -\sum_{i=1}^m\bar\lambda_i |\chi_i|^2\,.
\]
Thus, letting $P_i=T_i\bar P_iT_i$ and $\lambda_i\leq \bar\lambda_i|T_i|^2$ for $i=1,\ldots,m$, we complete the proof.

\vspace{-0.5em}
\subsection{Proof of Lemma \ref{lemma-5}}
\label{app-sec-lem5}

Let $\Psi_m= \frac{\lambda_m}{2}\ell_m^2 -\varrho_0 -\kappa \ell_m$ and
\[\ba{l}
\Psi_i=\frac{\lambda_i}{4}\ell_i^2-\sum_{j=i+1}^m \frac{2(j-1)|P_j|^2\iota_{ji}^2}{\lambda_j}\ell_j^{2(r_j-r_i+1)}-\varrho_0 -\kappa \ell_i\,
\ea\]
for $1\leq i\leq m-1$, which indicates that the proof is completed if it is shown that $\Psi_i\geq0$ for all $1\leq i\leq m$.
We proceed to show this by a recursive method.

\emph{{\bf Step} 1:} Let us consider the case that $i=m$. With the choice of $\ell_m$ given in (\ref{c-ell}), choosing $g_m>{2}/{\lambda_m}$ and letting $\mu_m={\lambda_m}g_m^2/2-g_m$, we observe that $\mu_m>0$. Thus, it can be seen that $\Psi_m\geq 0$ for all $\kappa \geq \theta_m$ with $\theta_m = \max\{1,\sqrt{{\varrho_0}/{\mu_m}}\}$.

\emph{{\bf Step} 2:} With the choice of $\ell_{m-1}$ in (\ref{c-ell}),  $\Psi_{m-1}$ reads as
\[\ba{l}
\Psi_{m-1} = \left[\frac{\lambda_{m-1}}{4}g_{m-1}^2 - \frac{2(m-1)|P_m|^2\iota_{m,m-1}^2}{\lambda_m}\right]\ell_m^{2(r_m-r_{m-1}+1)} \\ \qquad -\varrho_0 -\kappa g_{m-1}\ell_{m}^{(r_m-r_{m-1}+1)}\,\\
\geq \mu_{m-1}\kappa^{2(r_m-r_{m-1}+1)} -\varrho_0
\ea\]
where the inequality is obtained using $\kappa\geq 1$ and defining
\[\ba{l}
\mu_{m-1}:=\left[\frac{\lambda_{m-1}}{4}g_{m-1}^2 - \frac{2(m-1)|P_m|^2\iota_{m,m-1}^2}{\lambda_m}\right] g_m^{2(r_m-r_{m-1}+1)}\,\\ \qquad -g_{m-1}g_m^{(r_m-r_{m-1}+1)}\,.
\ea\]
Given any fixed $g_m$, it is clear that there exists a positive constant $g_{m-1}^\ast>0$, independent on $\kappa$ such that $\mu_{m-1}>0$ for all  $g_{m-1}>g_{m-1}^\ast$.
This further implies $\Psi_{m-1}\geq0$ for all $\kappa\geq \theta_{m-1}:=
\theta_{m-1}= \max\left\{1,({\varrho_0}/{\mu_{m-1}})^{\frac{1}{2(r_m-r_{m-1}+1)}}\right\}.
$

\emph{{\bf Step} m-i+1:} Following the previous design, we now proceed to the $m-i+1$-th step, $i=1,\ldots,m$, and have fixed $g_j$ and $\theta_j$ for $j=i+1,\ldots,m$. With (\ref{c-ell}), we observe that
{
\[\ba{l}
\ell_i = g_i(\ell_{i+1})^{r_{i+1}-r_i+1}\,\\
 = g_i(g_{i+1})^{r_{i+1}-r_i+1}(\ell_{i+2})^{\Pi_{k=1}^2(r_{i+k}-r_{i+k-1}+1)}\,\\
 = g_i \Pi_{q=i+1}^{m-1}(g_{q})^{\Pi_{k=i+1}^{m-1}(r_{k}-r_{k-1}+1)} (\ell_{m})^{\Pi_{k=i+1}^{m}(r_{k}-r_{k-1}+1)}\\
 = g_i \Pi_{q=i+1}^{m}(g_{q})^{\Pi_{k=i+1}^{m}(r_{k}-r_{k-1}+1)}\kappa^{\Pi_{k=i+1}^{m}(r_{k}-r_{k-1}+1)}.
\ea\]
Then  $\Psi_{i}$ reads as
\beeq{\label{poly}\ba{l}
\Psi_i(\kappa)=
\dst\omega_{i,i} \kappa^{\dst2\Pi_{k=i+1}^{m}(r_{k}-r_{k-1}+1)} \,\\  - \dst\sum_{j=i+1}^m \omega_{i,j} \kappa^{\dst2(r_j-r_i+1)\Pi_{k=j+1}^{m}(r_{k}-r_{j-1}+1)} \,\\  - \omega_{i,0} \kappa^{\dst\Pi_{k=i+1}^{m}(r_{k}-r_{k-1}+1)+1} -\varrho_0
\ea}
where
\beeq{\label{omega}\ba{l}
\omega_{i,i} = ({\lambda_i}/{4})g_i^2\dst\Pi_{q=i+1}^{m}(g_{q})^{2\Pi_{k=i+1}^{m}(r_{k}-r_{k-1}+1)}\,\\
\omega_{i,j} = \dst\frac{2(j-1)|P_j|^2\iota_{ji}^2}{\lambda_{j}}(g_j)^{2(r_j-r_i+1)}\cdot\,\\ \qquad\cdot \dst\Pi_{q=j+1}^{m} (g_q)^{2(r_j-r_i+1)\Pi_{k=j+1}^{m}(r_k-r_{k-1}+1)}\,,\\ \qquad j=i+1,\ldots,m\,\\
\omega_{i,0} = g_i\dst\Pi_{q=i+1}^{m}(g_q)^{\Pi_{k=i+1}^{m}(r_k-r_{k-1}+1)}\,.
\ea}
}
In this way, the $\Psi_i$ in (\ref{poly}) is expressed as a polynomial of $\kappa$.
Moreover, it is noted that $r_i\leq r_{i+1}\leq\cdots\leq r_m$, and
\[\ba{l}
\Pi_{k=i+1}^{m}(r_{k}-r_{k-1}+1) \\ \geq (r_j-r_i+1)\Pi_{k=j+1}^{m}(r_{k}-r_{k-1}+1)\geq 1
\ea\]
holds for all $j=i+1,\ldots,m$. Thus, given $\kappa\geq1$ we have
\[
\Psi_i \geq \mu_i\kappa^{\dst2\Pi_{k=i+1}^{m}(r_{k}-r_{k-1}+1)}-\varrho_0\,.
\]
with $\mu_i:=\omega_{i,i}-\sum_{j=i+1}^m\omega_{i,j}-\omega_{i,0}$.
Recalling (\ref{omega}), it can be seen that given any fixed $g_j$, $j=i+1,\ldots,m$, there always exists $g_i^\ast>0$, independent on $\kappa$ such that $\mu_i >0$ for all $g_i>g_i^\ast$.
With the above choice of $g_i$ being the case, it then can be easily shown that there exists a $\theta_i>0$ such that for all $\kappa\geq\theta_i$, the polynomial function $\Psi_i$ is positive.

Finally, following the previous recursive design, at the step $m$ we can fix $g_1$ and $\theta_1$. Therefore, choosing $\theta^\ast=\max\{\theta_1,\ldots,\theta_m\}$, we can conclude that for any $\kappa\geq\theta^\ast$, $\Psi_i\geq0$ hold for all $i=1,\ldots,m$, which completes the proof.


\begin{thebibliography}{99}

\bibitem{Isidori(1999)}
A. Isidori, \emph{Nonlinear Control Systems \mbox{II}}, Springer, 1999.

\bibitem{Khalil(book2002)}
H. Khalil. \emph{Nonlinear Systems}, 3rd ed. Upper Saddle River, NJ: Prentice-Hall, 2002.

\bibitem{Freidovich&Khalil(TAC2008)}
L. Freidovich, and H.K. Khalil, ``Performance recovery of feedback-linearization-based designs," \emph{IEEE Trans. Automat. Contr.}, vol.53 no.10, pp. 2324-2334, 2008.

\bibitem{WangSCL}
L. Wang, A. Isidori and H. Su, ``Output feedback stabilization of nonlinear MIMO systems having
uncertain high-frequency gain matrix," {\em Sys. Control Lett.}, vol.83, pp.1-8, 2015.

\bibitem{Guo&Zhao(2013)}
B. Guo, Z. Zhao, ``On convergence of the nonlinear active disturbance rejection control for MIMO systems," \emph{SIAM Journal on Control and Optimization}, vol. 51, no.2, pp.1727-1757, 2013.

\bibitem{WangTAC}
L. Wang, A. Isidori and H. Su, ``Global Stabilization of a Class of Invertible MIMO Nonlinear Systems," \emph{IEEE Trans. Automat. Contr.}, vol.60 no.3, pp. 616-631, 2015.

\bibitem{Wang&Isidori(AUT2017)}
L. Wang, A. Isidori, Z. Liu  and H. Su, ``Robust output regulation for invertible nonlinear MIMO systems," {\em Automatica}, vol.82, pp. 278-286, 2017.

\bibitem{Wang&Isidori(TAC2017)}
L. Wang, A. Isidori, L. Marconi  and H. Su, ``Stabilization by output
feedback of multivariable invertible nonlinear systems," \emph{IEEE Trans. Automat. Contr.}, vol.62, pp. 2419-2433, 2017.

\bibitem{Isidori(2017)}
A. Isidori, \emph{Lectures in Feedback Design for Multivariable Systems}, Springer, 2017.

\bibitem{Wu&Isidori(2019)}
Y. Wu, A. Isidori, R. Lu and H. Khalil, ``Performance recovery of dynamic feedback-linearization methods for multivariable nonlinear systems," \emph{IEEE Trans. Automat. Contr.}, DOI: 10.1109/TAC.2019.2924176.



\bibitem{Hirschorn} R. M. Hirschorn, ``Invertibility of multivariable nonlinear control systems," \emph{IEEE Trans. Automat. Contr.}, vol.24, pp. 855-865, 1979.

\bibitem{Singh} S. N. Singh, ``A modified algorithm for invertibility in nonlinear systems," \emph{IEEE Trans. Automat. Contr.}, vol.26, no.2, pp. 595-598, 1981.


\bibitem{Daniele&Lorenzo}
D. Astolfi, and L. Marconi, ``A High-Gain Nonlinear Observer with Limited Gain Power," \emph{IEEE Trans. Automat. Contr.}, vol.53 no.10, pp. 2324-2334, 2016.

\bibitem{Khalil(2017book)}
H. K. Khalil, \emph{High-Gain Observers in Nonlinear Feedback Control,} Society for Industrial and Applied Mathematics, 2017.

\bibitem{Liberzon(TAC2003)}
D. Liberzon, S. Morse and E.D. Sontag, ``Output-input stability and minimum-phase nonlinear systems," \emph{IEEE Trans. Automat. Contr.}, vol. 47, no. 3, pp. 422-436, 2003.


\bibitem{Wang&Marconi(AUT2017)}
L.Wang, D. Astolfi, L. Marconi and H. Su, ``High-gain observers with limited gain power for systems with
observability canonical form," {\em Automatica}, vol.75, pp. 16-23, 2017.



\bibitem{Praly&Jiang(CDC1998)}
L. Praly, Z.P. Jiang, ``Further results on robust semiglobal stabilization with dynamic input uncertainties," in \emph{ Proceedings of  IEEE Conference on Decision and Control,} pp.891 - 896, 1998.




\bibitem{Praly&Jiang(1998)}
L. Praly and Z.P. Jiang, ``Semiglobal stabilization in the presence of minimum-phase dynamic input uncertainties," in \emph{Proceedings of IFAC Symposium on Nonlinear Control Systems}, vol. 2 pp. 325-330,  1998.

\bibitem{Teel(1995)}
A. Teel, L. Praly,
``Tools for semiglobal stabilization by partial state and output feedback," \emph{SIAM Journal on Control and Optimization}, vol.33, no.5, pp. 1443-1488, 1994.





\bibitem{Ito2002}
H. Ito, ``A constructive proof of ISS small-gain theorem using generalized scaling," in \emph{Proceedings of  IEEE Conference on Decision and Control,} vol. 2, pp. 2286-2291, 2002.

\bibitem{Kellett2015}
C. M. Kellett and P. M.  Dower, ``Input-to-State Stability, Integral Input-to-State Stability, and ${\cal L}_{2} $-Gain Properties: Qualitative Equivalences and Interconnected Systems," \emph{IEEE Trans. Automat. Contr.}, vol. 61, no.1, pp.3-17, 2015.


\bibitem{Hadamard1906}
J. Hadamard, ``Sur les transformations ponctuelles," \emph{Bull. Soc. Math. France}, vol. 34, pp. 71-84, 1906.


\bibitem{Khalil1992}
F. Esfandiari, H. K. Khalil, ``Output feedback stabilization of fully linearizable systems," \emph{International Journal of Control}, vol. 56, no. 5, pp. 1007-1037, 1992.


\end{thebibliography}
\end{document}